\newtheorem{theorem}{Theorem}
\newtheorem{lemma}{Lemma}
\newtheorem*{proof}{Proof}
\newcolumntype{C}[1]{>{\centering\arraybackslash}p{#1}}
\newcolumntype{R}{>{\FormatNo} c }
\def\FormatNo\ignorespaces#1\\{%
	\ignorespaces\makebox[\widthof{000}][r]{#1}\tabularnewline}
\begin{document}

\title{An Efficient and Balanced Graph Partition Algorithm for the Subgraph-Centric Programming Model on Large-scale Power-law Graphs}

\author{\IEEEauthorblockN{Shuai Zhang\IEEEauthorrefmark{2}\IEEEauthorrefmark{3}, Zite Jiang\IEEEauthorrefmark{2}\IEEEauthorrefmark{3}, Xingzhong Hou\IEEEauthorrefmark{2}\IEEEauthorrefmark{3}, Zhen Guan\IEEEauthorrefmark{2}, Mengting Yuan\IEEEauthorrefmark{4} and Haihang You\thanks{\IEEEauthorrefmark{1} Haihang You and Mengting Yuan are corresponding authors.} \IEEEauthorrefmark{2}}
	\IEEEauthorblockA{\IEEEauthorrefmark{2}SKL Computer Architecture, Institute of Computing Technology, Chinese~Academy~of~Sciences,~Beijing,~China \\
		Email: \{zhangshuai-ams, jiangzite19s, houxingzhong19z, guanzhen, youhaihang\}@ict.ac.cn} 
	\IEEEauthorblockA{\IEEEauthorrefmark{3}School of Computer Science and Technology, University of Chinese Academy of Sciences} 
	\IEEEauthorblockA{\IEEEauthorrefmark{4}School of Computer Science, Wuhan University\\
		Email: ymt@whu.edu.cn}
}
\maketitle

\begin{abstract}
Nowadays, the parallel processing of power-law graphs is one of the biggest challenges in the field of graph computation.
The subgraph-centric programming model is a promising approach and has been applied in many state-of-the-art distributed graph computing frameworks. 
The graph partition algorithm plays an important role in the overall performance of subgraph-centric frameworks. 
However, traditional graph partition algorithms have significant difficulties in processing large-scale power-law graphs. 
The major problem is the communication bottleneck found in many subgraph-centric frameworks. 
Detailed analysis indicates that the communication bottleneck is caused by the huge communication volume or the extreme message imbalance among partitioned subgraphs. 
The traditional partition algorithms do not consider both factors at the same time, especially on power-law graphs.

In this paper, we propose a novel efficient and balanced vertex-cut graph partition algorithm (EBV) which grants appropriate weights to the overall communication cost and communication balance. 
We observe that the number of replicated vertices and the balance of edge and vertex assignment have a great influence on communication patterns of distributed subgraph-centric frameworks,
which further affect the overall performance. 
Based on this insight, We design an evaluation function that quantifies the proportion of replicated vertices and the balance of edges and vertices assignments as important parameters. 
Besides, we sort the order of edge processing by the sum of end-vertices' degrees from small to large.
Experiments show that EBV reduces replication factor and communication by at least 21.8\% and 23.7\% respectively than other self-based partition algorithms. When deployed in the subgraph-centric framework, it reduces the running time on power-law graphs by an average of 16.8\% compared with the state-of-the-art partition algorithm. Our results indicate that EBV has a great potential in improving the performance of subgraph-centric frameworks for the parallel large-scale power-law graph processing.
\end{abstract}

\begin{IEEEkeywords}
Graph Partition, Large-scale Power-law Graph, Parallel Graph Computation, Subgraph-Centric Model
\end{IEEEkeywords}

%

\section{Introduction}
In recent decades, the need for processing large-scale graphs is increasing in both research and industry communities. 
Moreover, the parallel processing of graphs is one of the biggest challenges in the field of graph computation.
Graph computation has become vital to a wide range of data analysis tasks such as link prediction and graph pattern matching. 
Mining ``insights" from large-scale graphs is a challenging and engaging research area. 
However, the parallel graph computation poses great difficulties for graphs with billions of vertices and edges on modern high-performance computing clusters with thousands of computing nodes. The basic approaches for large-scale parallel graph computation can be divided into two programming models: the \emph{vertex-centric} model~\cite{malewicz2010pregel} and the \emph{subgraph-centric} model~\cite{tian2013think}. 

The vertex-centric model, labeled as ``think like a vertex", is an engineering approach similar in concept to MapReduce~\cite{dean2008mapreduce}.
Although the vertex-centric model has been successfully used in many parallel graph computation applications, researchers often report that it has significant bottlenecks in the communication for exchanging excessive messages through the network~\cite{tian2013think}.
On the other hand, the subgraph-centric model (a.k.a. block-centric, partition-centric)~\cite{fan2017parallelizing,tian2013think,yan2014blogel} provides another approach for the parallel graph computation, and has been applied in some state-of-the-art platforms~\cite{fan2017parallelizing}. 
Compared to the vertex-centric model, the subgraph-centric model focuses on the entire subgraph.
This model is labeled as ``think like a graph".
Since subgraphs are more ``coarsely grained'' than a single vertex, they retain many inner edges. 
For this reason, many messages are omitted from transferring through the network~\cite{fan2017parallelizing,yan2014blogel}.
Thus, the subgraph-centric model usually has less communication and converges faster. 

Although the subgraph-centric model has many advantages, it does not reach its full potential in processing large-scale real-world graphs such as Twitter, Facebook and citation graphs~\cite{gonzalez2012powergraph}.
These real-word graphs share a common property that their vertex degree distributions are long tailed. Therefore, they are categorized as power-law graphs~\cite{albert2002statistical}.
Existing graph partition algorithms have difficulties in partitioning power-law graphs.
For example, METIS~\cite{karypis1997parmetis} and NE~\cite{ne} aim to minimize the number of edges and vertices cut by partitioning, hence reducing the cost of communication and improving the efficiency of graph partition results.
However, METIS only considers the number of vertices and NE only considers the number of edges when balancing their results.
Since the degree distribution of power-law graphs is skewed, the number of edges incident on each vertex varies greatly.
Thus they can not balance both vertices and edges by evenly distributing one of them for power-law graphs.

To improve the performance of subgraph-centric frameworks on large-scale power-law graphs, we analyze the communication pattern of several different partition algorithms. 
Based on our analysis, we propose an efficient and balanced vertex-cut graph partition algorithm (EBV).
EBV assigns each edge based on the current value of a well-designed evaluation function, which considers both the total number of cutting vertices and the balance of graph partition results. 
For handling the skewed degree distribution of power-law graphs, we adopt appropriate weights to balance the number of edges and vertices when partitioning.
Moreover, we design an edge sorting preprocessing process, which sorts edges in ascending order by the sum of their end-vertices' degrees before partitioning.
For power-law graphs, the edges with two low-degree end-vertices are assigned at the beginning.
Since the degrees of their end-vertices are low, they are less likely to share the same end-vertex.
Thus the balance of graph partition results is the major factor in the early stage and these low-degree vertices are evenly assigned to each subgraphs as seeds.
We analyze the effects of the evaluation function and the edge sorting preprocessing process theoretically and experimentally.
Our experiments show that EBV outperforms the state-of-the-art partition algorithms for large-scale power-law graphs. 

This paper makes the following contributions:
\begin{itemize}
	\item We propose an efficient and balanced vertex-cut graph partition algorithm.
	\item We compare EBV with the state-of-the-art graph partition algorithms and several parallel graph computation frameworks in detail.
	\item We use the number of communication messages as a platform-independent metrics to compare partition algorithms.
	\item We study the influence of the sorting preprocessing for the EBV algorithm.
\end{itemize}

This paper is organized as follows:
In Section~\ref{sec:motivation} we discuss the challenges arising from large-scale power-law graphs and explain our motivation.
In Section~\ref{sec:back} we review the current graph partition algorithms and introduce the basic notations and metrics for this paper.
In Section~\ref{sec:implementation} we propose an efficient and balanced vertex-cut graph partition algorithm on the subgraph-centric model. 
In Section~\ref{sec:experiments} we present and analyze several experiments, which demonstrate the characteristics and capabilities of our algorithm. 
Finally, we discuss the related work, conclude our work and preview some future projects in Section~\ref{sec:relatedwork} and Section~\ref{sec:conclusion}.



\section{Motivation}\label{sec:motivation}
Considering the scope of information used when partitioning, current graph partition algorithms can be classified into two major categories: local-based and self-based.
When partitioning graphs, the local-based algorithms assign edges and vertices according to the local information (part of the graph)~\cite{local}.
However, the self-based algorithms only consider their own attributes (e.g. vertex degree, ID) when assigning edges and vertices.

Subgraph-centric frameworks usually employ some local-based frameworks such as METIS~\cite{karypis1997parmetis,wen2018drone} and its variants~\cite{tian2013think}. 
Neighbor Expand (NE)~\cite{ne} is also a local-based partition algorithm, which expands the new ``core vertex" by searching in the boundary set.
The local-based partition algorithms (METIS and NE) are more concerned about reducing the number of replicated edges and vertices while saving the local structure.
However, the skewed degree distribution of power-law graphs means that the ratios of edges to vertices in the local structures are not uniform, which brings difficulties in balancing both edges and vertices while keeping the local structures.
The imbalanced assignment further results in the imbalanced communication and computation of graph algorithms.

On the other hand, vertex-centric frameworks often use self-based partition algorithms, such as Giraph~\cite{Giraph}, Powergraph~\cite{gonzalez2012powergraph} and Galois~\cite{Galois}.
Recently, many self-based graph partition algorithms have also been proposed for power-law graphs.
The Degree-Based Hashing~\cite{xie2014distributed} (DBH) makes effective use of the skewed degree distribution of power-law graphs. 
It assigns each edge by hashing the ID of its end-vertex with a lower degree.
PowerLyra~\cite{chen2019powerlyra} adopts the hybrid-cut that distinguishes the low-degree and high-degree vertices during the partition process.
Inspired by Fennel~\cite{Fennel}, which is a greedy streaming edge-cut framework, they further propose Ginger by improving the hybrid-cut.
These partition algorithms are simple and efficient to handle large-scale power-law graphs and produce the good balance property.
However, they ignore and destroy the local structure of partitioned graphs.
Therefore, the total communication volume of them is much larger than those local-based algorithms.

Both the total communication volume and the message imbalance need to be considered to find an optimal partition result in power-law graphs.
We demonstrate this principle with a comparison of METIS, NE, Ginger, DBH and CVC.
We also seek to devise a novel partition algorithm following the above principle.
However, the optimal algorithm considering both the total communication volume and the message imbalance has been proved to be NP-hard~\cite{bui1992finding}.
Thus we follow the self-based approach and consider these two factors in our algorithm.

For self-based graph partition algorithms, the edge processing (assigning) order greatly affects the performance of the partition results.
We propose a sorting preprocessing mechanism, which sorts edges with the sum of their end-vertices' degrees in ascending order.
We demonstrate the effect of our sorting preprocessing with a comparison of the alphabetical order in Figure~\ref{fig:order_comp}.
In Figure~\ref{fig:order_comp}, the raw graph is an undirected graph with uneven degree distribution.
We partition the raw graph with our EBV algorithm in different edge processing orders.
The result of sorting preprocessing is more balanced than that of the alphabetical order.

For the result of alphabetical order, it assigns edge $(B, C)$ at the end.
Considering the balance of partition results, $(B, C)$ should be assigned to subgraph $0$.
However, it will lead to cut two more vertices $B$ and $C$.
Thus $(B, C)$ is assigned to subgraph $1$ by EBV.

\begin{figure}
	\centering
	\includegraphics[width=0.9\linewidth]{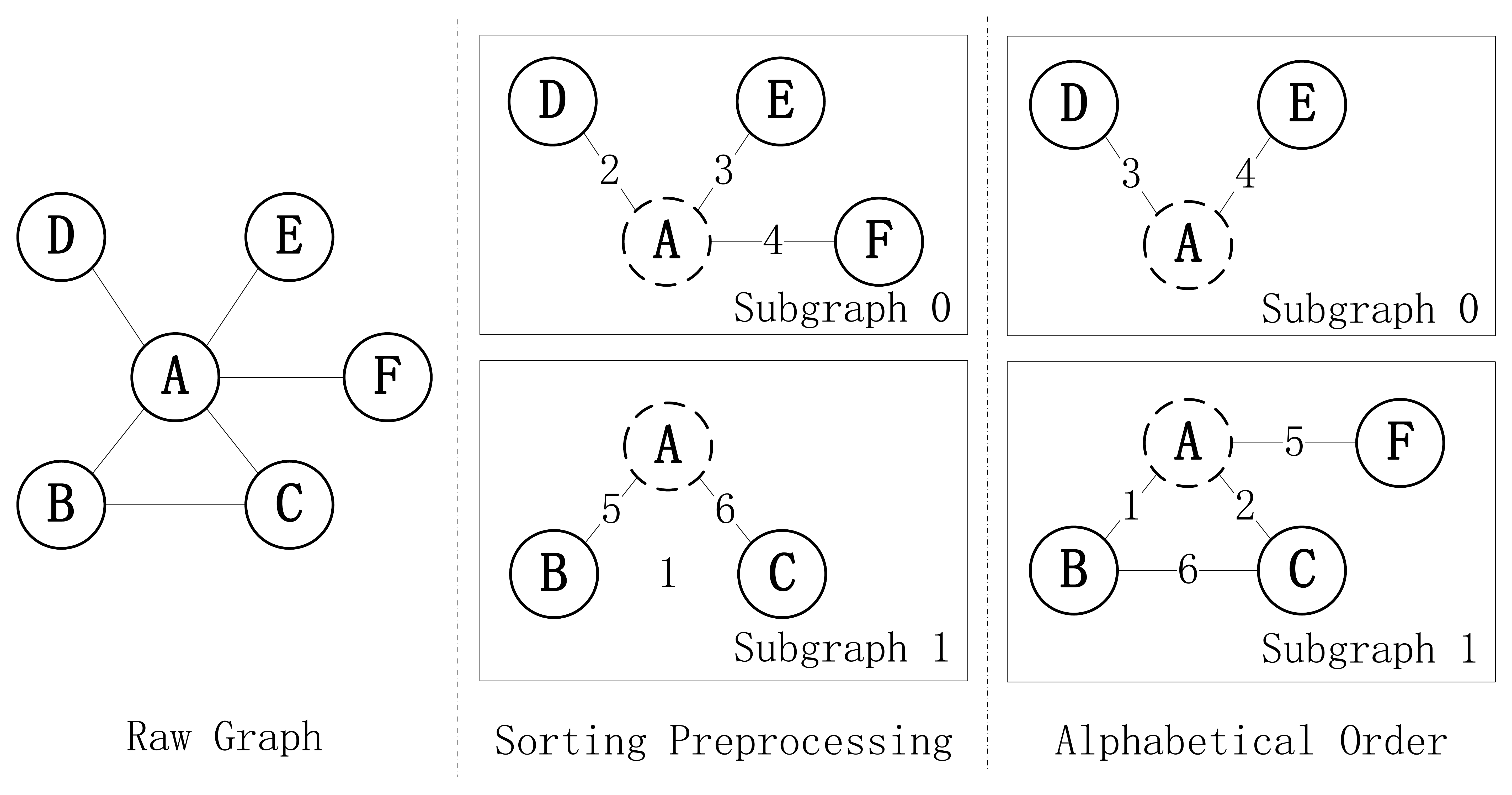}
	\caption{Edge processing order comparison \protect\\
		A comparison of partitioning the raw graph in sorting preprocessing order and alphabetical order. 
		The cut vertices are represented as dot line and the number on each edge indicates the processing order.
	}
	\label{fig:order_comp}
\end{figure}


\section{Background and Preliminaries}\label{sec:back}
In this section, we introduce the properties of large-scale power-law graphs and contemporary graph partition algorithms. We also introduce some notations and metrics for the rest of this paper.

\subsection{Large-Scale Power-law Graph}\label{sec:powerlaw}
Processing large-scale real-world graphs is one of the key topics in the parallel graph computation. One of the most notable properties for real-world graphs is the power-law degree distribution~\cite{gonzalez2012powergraph}.
The power-law graph is also called the scale-free graph. 
The power-law degree distribution is common for natural graphs such as social networks (Twitter and collaboration networks) and the graph of the World Wide Web. 
In the power-law graph, the probability of degree $d$ for a randomly sampled vertex is given by: 
\begin{equation}\label{equ:power} \mathbb{P}(degree = d) \propto d^{-\eta} \end{equation}
where the exponent $\eta$ is a positive constant. 
The lower $\eta$ is, the more skewed the graph will be. 

\subsection{Graph Partitioners}
Existing graph partition algorithms can be divided into two main categories: edge-cut (vertex partitioning) and vertex-cut (edge partitioning).
Edge-cut method cuts the cross-partition edges and balances the number of vertices in all subgraphs.
Each subgraph needs to maintain the routing messages and the end-vertices on the other side (ghost vertices) of those replicated edges.
On the contrary, vertex-cut method cuts the vertices and balances the number of edges.
The cut vertices (replicated vertices) are also maintained in multiply subgraphs.
The majority of distributed graph computing systems (e.g.~\cite{low2012distributed, malewicz2010pregel, tian2013think, yan2014blogel}) use edge-cut for graph partitioning. 
However, vertex-cut has been proposed in~\cite{gonzalez2012powergraph} as a better approach to process graphs with power-law degree distributions. 
Thus many current graph partition algorithms are based on the vertex-cut method, such as CVC~\cite{boman2013scalable}, Ginger~\cite{chen2019powerlyra}, NE~\cite{ne} and ADWISE~\cite{ADWISE}.
We also adopt the vertex-cut method for our EBV algorithm.

\subsection{Preliminaries}\label{sec:pre}
Let $G = (V, E)$ be a directed graph, where $V$ denotes the set of vertices and $E$ denotes the set of edges. 
An edge is represented as a pair of vertices $(u, v)$ where $u$ is the source vertex and $v$ is the target vertex.
For the undirected graph, we replace each undirected edge with two edges with opposite directions.
Suppose we partition the graph $G$ into $p$ subgraphs, we denote $G_i(V_i, E_i)$, $i \in [1, p]$ as the $i^{th}$ subgraph.

The vertex-cut (edge partitioning) algorithms partition the edge set $E$ into $p$ subsets.
Let $E_1 \cup E_2 \cup \cdots \cup E_p = E$ be the $p$ partitions of $E$, i.e. $E_i \cap E_j = \emptyset$, $\forall i \ne j$. 
Here we define $V_i = \{u \mid (u, v) \in E_i \lor (v, u) \in E_i \}$ as the set of vertices covered by $E_i$. 
Due to the definition of $V_i$, there may exist some replicated vertices which have the same vertex ID but belong to different subgraphs.

For the edge-cut (vertex partitioning) algorithms, they partition the vertex set $V$.
Let $V_1 \cup V_2 \cup \cdots \cup V_p = V$ and $V_i \cap V_j = \emptyset$ for all $i \ne j$, we define $E_i = \{(u, v) \mid u \in V_i \lor v \in V_i \}$. The definition of $E_i$ here means that there exists some replicated edges such that $E_i \cap E_j \ne \emptyset$.

Further, we introduce three metrics: edge imbalance factor, vertex imbalance factor and replication factor.
These metrics are also widely used in~\cite{boman2013scalable,chen2019powerlyra,xie2014distributed, ne}.
The edge imbalance factor is defined as  
\begin{math}
\frac{\max_{i=1,...,p} |E_{i}|}{|E|/p}
\end{math} 
, while the vertex imbalance factor is defined as 
\begin{math}
\frac{\max_{i=1,...,p} |V_{i}|}{\sum_{i=1}^{p} |V_{i}|/p}
\end{math}
. Both of them are used to measure the balance of partition results.
The replication factor for the vertex-cut algorithm is defined as   
\begin{math}
\frac{\sum_{i=1}^{p} |V_{i}|}{|V|}
\end{math}.
However, we have $\sum_{i=1}^{p} |V_{i}| = |V|$ for the edge-cut algorithm.
Thus the definition of the replication factor cannot be directly adapted and we define 
\begin{math}
\frac{\sum_{i=1}^{p} |E_{i}|}{|E|}
\end{math} as the replication factor for the edge-cut algorithm.
The replication factor represents the average number of replicas for a vertex or edge.
\section{Efficient and Balanced Vertex-Cut Partition Algorithm}\label{sec:implementation}
\subsection{Overview}
In order to solve the large-scale power-law graph computation problem in subgraph-centric frameworks, we devise a highly scalable efficient and balanced vertex-cut partition algorithm (EBV).
First, we demonstrate the subgraph-centric, bulk synchronous parallel programming model and workflow as our main test case.
Second, we propose and analyze our EBV algorithm.
Finally, we prove the upper bound of the edge imbalance factor and vertex imbalance factor of the EBV algorithm for general graphs.

\subsection{Subgraph-Centric Bulk Synchronous Parallel Model and Workflow}\label{sec:api}
The subgraph-centric, bulk synchronous parallel (BSP) model~\cite{valiant1990bridging} is one of the most popular programming models for the subgraph-centric approach. 
All of the partition algorithms we test are based on this model.
In this model, the whole graph is divided into several subgraphs. 
Each subgraph is bound to one worker (process), and each worker handles only one subgraph. 
For fair comparison, here we do not use multi-threading technology for accelerating. 

The entire graph processing of subgraph-centric, BSP model is iterative and can be divided into supersteps with three stages: the computation stage (update graph), the communication stage (exchange messages) and the synchronization stage (wait for other workers to complete message exchanging).
In the computation stage, the sequential algorithm takes the current subgraph and receiving messages as input. 
Each subgraph is updated by the sequential algorithm. 
In the communication stage, only the message sending/receiving operations among the replicated vertices are allowed.
Usually, the messages contain sufficient information for updating the receiving vertices' states.
The synchronization stage is designed as a barrier for separating supersteps.
In this stage, each worker waits for other workers to finish their computation and communication stages.
The term ``bulk synchronous'' is derived from such a process.


\subsection{Efficient and Balanced Vertex-Cut Partition}\label{sec:greedy}
\begin{algorithm}[htb]
	\caption{Efficient and Balanced Vertex-Cut Partition Algorithm}
	\label{alg:greedy}
	\DontPrintSemicolon
	\KwIn{Graph $G(V,E)$, the number of subgraphs $p$}
	\KwOut{Partition result $part$, $part_{(u,v)}$ is the part assignment for edge $(u,v)$}
	
	\For{$i \in [1, p]$} {
		$//$ $keep[i]$ saves the vertex set that the $i^{th}$ subgraph should keep.\;
		$keep[i] \gets \emptyset$\;
		$//$ $e_{count}[i]$ and $v_{count}[i]$ are the number of edges and vertices of the $i^{th}$ subgraph.\;
		$e_{count}[i] \gets 0$,  $v_{count}[i] \gets 0$ \;
	}
    \For{$(u, v) \in E$} {
    	$//$ Calculate the evaluation function\;
    	\For{$i \in [1, p]$} {
    		$Eva[i] \gets 0$\;
    		\If{$u \notin keep[i]$} {
    			$Eva[i] \gets Eva[i] + 1$\;
    		}
    		\If{$v \notin keep[i]$} {
    			$Eva[i] \gets Eva[i] + 1$\;
    		}
    		$Eva[i] \gets Eva[i] + \alpha \frac{e_{count}[i]}{|E| / p} + \beta \frac{v_{count}[i]}{|V|/p}$
    	}
    	$part_{(u,v)} \gets \mathop{\arg\min}\limits_{ i} Eva[i]$\;
    	$//$ Update variables for further partition\;
    	$e_{count}[part_{(u,v)}] \gets e_{count}[part_{(u,v)}] + 1$\;
    	\If{$u \notin keep[part_{(u,v)}]$}{
    		$v_{count}[part_{(u,v)}] \gets v_{count}[part_{(u,v)}] + 1$\;
    	}
   		 \If{$v \notin keep[part_{(u,v)}]$}{
    		$v_{count}[part_{(u,v)}] \gets v_{count}[part_{(u,v)}] + 1$\;
    	}
    	$keep[part_{(u,v)}] \gets keep[part_{(u,v)}] \cup \{u, v\}$\;
    }
\end{algorithm}

Algorithm~\ref{alg:greedy} describes in details how our EBV partitions the whole graph $G(V, E)$ into $p$ subgraphs. 
The EBV algorithm takes the graph $G(V, E)$ and the number of subgraphs $p$ as input and outputs the partition result.
The $keep$, $e_{count}$ and $v_{count}$ are auxiliary variables and updated dynamically. 
$keep[i]$ saves the vertex set of the $i^{th}$ subgraph, while $e_{count}[i]$ and $v_{count}[i]$ represent the number of edges and vertices have been assigned to the $i^{th}$ subgraph.

We abstract an evaluation function
\begin{equation} 
\label{equ:eva}
\begin{aligned}
Eva_{(u, v)}(i) =& \mathbb{I}(u \notin keep[i]) + \mathbb{I}(v \notin keep[i]) \\
+& \alpha \frac{e_{count}[i]}{|E| / p} + \beta \frac{v_{count}[i]}{|V|/p}
\end{aligned}
\end{equation}
from this algorithm.
The evaluation function $Eva_{(u, v)}(i)$ is used to measure the benefit of assigning edge $(u, v)$ to subgraph $i$.
The smaller $Eva_{(u, v)}(i)$ is, the more suitable this assignment is.
Our algorithm will assign edge $(u, v)$ to worker $i$ which has the smallest $Eva_{(u, v)}(i)$.
The $\mathbb{I}(State)$ in (\ref{equ:eva}) is an indicator function, which returns $1$ for $State$ is true and $0$ for false.
The hyperparameters $\alpha$ and $\beta$ are used to adjust sensitivity for the balance of edges and vertices in the evaluation function.
The larger they are, the more our algorithm focuses on balancing edges and vertices.
For the experiments in Section~\ref{sec:experiments}, we set $1$ as the default value of $\alpha$ and $\beta$.

The evaluation function can be divided into three parts: $\mathbb{I}(u \notin keep[i]) + \mathbb{I}(v \notin keep[i])$, $\alpha \frac{e_{count}[i]}{|E| / p}$ and $\beta \frac{v_{count}[i]}{|V|/p}$. 
$\mathbb{I}(u \notin keep[i]) + \mathbb{I}(v \notin keep[i])$ is related to the replication factor, while $\alpha \frac{e_{count}[i]}{|E| / p}$ and $\beta \frac{v_{count}[i]}{|V|/p}$ restrict the edge and vertex imbalance factor respectively. 
Thus, EBV can balance the overall communication/computation overhead and workload imbalance, and outperforms traditional graph partition algorithms.

Moreover, as a sequential graph partition algorithm, the quality of results for EBV is naturally affected by the edge processing order.
For offline partition jobs, we sort edges in ascending order by the sum of end-vertices' degrees before the execution of EBV.
Intuitively, this sorting mechanism prevents edges with high-degree end-vertices from being assigned at the early stage.
Since edges with two low-degree vertices are unlikely to share the same end-vertex, they are mainly assigned based on the imbalance constraints: $\alpha \frac{e_{count}[i]}{|E| / p}$ and $\beta \frac{v_{count}[i]}{|V|/p}$.
Thus, they tend to be evenly assigned to each subgraph at the beginning of EBV.
Generally, each subgraph keeps low-degree vertices as seeds within themselves in the early stage and cut few high-degree vertices later.
We verify this hypothesis through experiments in Section~\ref{sec:sorting}.

\subsection{Upper Bound of Edge and Vertex Imbalance Factors}\label{sec:alpha}
In this section, we show that the worst-case upper bound of the edge imbalance factor and the vertex imbalance factor of EBV for general graphs are $1 + \frac{p-1}{|E|} \left(1 + \left\lfloor \frac{2|E|}{\alpha p} + \frac{\beta}{\alpha} |E| \right\rfloor \right)$ and $1 + \frac{p-1}{\sum_{j = 1}^p|V_j|} \left(1 + \left\lfloor \frac{2|V|}{\beta p}  + \frac{\alpha}{\beta} |V| \right\rfloor \right)$ respectively. This result means that we can restrict the upper bound of edge and vertex imbalance factors by tuning the hyperparameters $\alpha$ and $\beta$.

\begin{theorem}
	Given any graph $G(V, E)$ and any positive integer $p$, partitioning $G$ into $p$ subgraphs by Algorithm~\ref{alg:greedy}, the upper bound of the edge imbalance factor for the partition result is $1 + \frac{p-1}{|E|} \left(1 + \left\lfloor \frac{2|E|}{\alpha p} + \frac{\beta}{\alpha} |E| \right\rfloor \right)$.
\end{theorem}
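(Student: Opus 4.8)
The plan is to freeze the algorithm's state at the single iteration that creates the bottleneck and read off a numerical inequality from the greedy choice made there. Let $j$ be a subgraph with $|E_j| = \max_{i=1,\dots,p} |E_i|$, write $M := |E_j|$, and assume $E \neq \emptyset$ and $p \ge 2$ (for $p = 1$ the claimed bound is $1$ and the imbalance factor is exactly $1$). Consider the iteration in which the \emph{last} edge $(u,v)$ ever routed to subgraph $j$ is processed: just before that assignment $e_{count}[j] = M-1$, and since Algorithm~\ref{alg:greedy} assigns $(u,v)$ to the index minimizing $Eva_{(u,v)}(\cdot)$, we have $Eva_{(u,v)}(j) \le Eva_{(u,v)}(i)$ for every $i \in [1,p]$, all quantities evaluated at that moment.

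Next I would convert that comparison into a bound on how far $|E_j|$ can sit above the other subgraphs. Dropping the non-negative indicator terms and the non-negative term $\beta\,v_{count}[j]/(|V|/p)$ gives $Eva_{(u,v)}(j) \ge \alpha\frac{M-1}{|E|/p}$. For the right-hand side, let $i^\star$ be a subgraph other than $j$ with minimal final edge count $|E_{i^\star}|$. Using that the two indicators contribute at most $2$, that $e_{count}[i^\star]$ at that moment is at most its final value $|E_{i^\star}|$, and that $v_{count}[i^\star] = |keep[i^\star]| \le |V|$ at every step, we obtain $Eva_{(u,v)}(i^\star) \le 2 + \alpha\frac{|E_{i^\star}|}{|E|/p} + \beta p$. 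Chaining the two estimates and solving for the gap yields $M - |E_{i^\star}| \le 1 + \frac{2|E|}{\alpha p} + \frac{\beta}{\alpha}|E|$; since $M - |E_{i^\star}|$ is an integer, it is in fact at most $K := 1 + \left\lfloor \frac{2|E|}{\alpha p} + \frac{\beta}{\alpha}|E| \right\rfloor$.

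Finally I would close the loop with an averaging argument. Since $\sum_{i \ne j} |E_i| = |E| - M$ is spread over $p-1$ subgraphs, the minimizer satisfies $|E_{i^\star}| \le \frac{|E|-M}{p-1}$, hence $M - |E_{i^\star}| \ge \frac{Mp - |E|}{p-1}$. Combining this with $M - |E_{i^\star}| \le K$ gives $Mp - |E| \le (p-1)K$, i.e. $M \le \frac{|E|}{p} + \frac{p-1}{p}K$. Dividing through by $|E|/p$ turns this into the bound $\frac{M}{|E|/p} \le 1 + \frac{p-1}{|E|}K$ on the edge imbalance factor, which is exactly the stated expression.

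The algebraic rearrangement, the passage from a real inequality to an integer one, and the degenerate cases are routine. The one step that actually requires thought is controlling the vertex term $\beta\,v_{count}[i^\star]/(|V|/p)$ in $Eva_{(u,v)}(i^\star)$: a priori nothing forbids $v_{count}[i^\star]$ from being large, and if it grew unboundedly the whole estimate would collapse. The key observation that rescues the argument is that $v_{count}[i^\star]$ simply counts distinct vertices of a single subgraph, so it never exceeds $|V|$, which pins that term at $\le \beta p$. A fully symmetric argument — tracking the subgraph with maximum $|V_i|$, using $e_{count}[\cdot] \le |E|$ to control the edge term, and replacing the averaging by $\sum_{i\ne j}|V_i| = \sum_{k}|V_k| - \max_i|V_i|$ — gives the companion upper bound $1 + \frac{p-1}{\sum_{j}|V_j|}\left(1 + \left\lfloor \frac{2|V|}{\beta p} + \frac{\alpha}{\beta}|V| \right\rfloor\right)$ for the vertex imbalance factor.
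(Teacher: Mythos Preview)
Your proof is correct and reaches the stated bound, but the route is organized differently from the paper's. The paper first proves a uniform lemma by induction: for \emph{every} pair $i,j$ and every step $m$, the running gap $e_{count}^m[i]-e_{count}^m[j]$ stays below $1+\lfloor\frac{2|E|}{\alpha p}+\frac{\beta}{\alpha}|E|\rfloor$; it then applies this at the final step and sums the gaps $|E_i|-|E_j|$ over all $j\ne i$ to obtain $(p-1)K$. You instead freeze a single critical moment (the last edge routed to the maximum subgraph), compare only to the one subgraph $i^\star$ with smallest final count, and then recover the factor $p-1$ by an averaging/pigeonhole step rather than a sum. Your version is slightly more elementary --- it sidesteps the induction entirely --- while the paper's Lemma~1 is a genuinely stronger intermediate statement (the gap is bounded \emph{throughout} the run, not just at the end), which may be of independent interest. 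The numerical bound and the handling of the indicator and vertex terms coincide in both arguments.
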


\begin{proof}
	For the sake of simplicity, we denote $Eva^m_{(u, v)}(i)$, $e_{count}^m[i]$, $v_{count}^m[i]$ and $keep^m[i]$ as the value of $Eva_{(u, v)}(i)$, $e_{count}[i]$, $v_{count}[i]$ and $keep[i]$ before assigning the $m^{th}$ edge, while $m \in [1, |E|]$.
	Specifically, we denote $e_{count}^{|E| + 1}[i]$ as $|E_i|$, the number of edges of the final subgraph $i$.
	
	Let the $m^{th}$ edge $(u, v)$ be assigned to subgraph $i$ ($1 \le i \le p$) by Algorithm~\ref{alg:greedy}.
	By line $15$ of Algorithm~\ref{alg:greedy}, we have
	\begin{equation} 
	\label{equ:inequality_0}
	Eva^m_{(u,v)}(i) - Eva^m_{(u,v)}(j) \le 0
	\end{equation}
	for any $j \in [1, p], i \ne j$.
	
	Substitute (\ref{equ:eva}) to (\ref{equ:inequality_0}), we obtain 
	\begin{equation} 
	\label{equ:inequality}
	\begin{aligned}
	\alpha \frac{e_{count}^m[i] - e_{count}^m[j]}{|E| / p} \le& \beta \frac{v_{count}^m[j] - v_{count}^m[i]}{|V| / p} \\
	+ &\mathbb{I}(u \notin keep^m[j]) +\mathbb{I}(v \notin keep^m[j]) \\
	- &\mathbb{I}(u \notin keep^m[i]) - \mathbb{I}(v \notin keep^m[i]).
	\end{aligned}
	\end{equation}
	
	Indicator function $\mathbb{I}(State)$ is a boolean function. Therefore, the upper bound of $\mathbb{I}(u \notin keep^m[j]) + \mathbb{I}(v \notin keep^m[j]) - \mathbb{I}(u \notin keep^m[i]) - \mathbb{I}(v \notin keep^m[i])$ is $2$.
	As we have $0 \le v_{count}^m[i], v_{count}^m[j] \le |V|$, $|v_{count}^m[i] - v_{count}^m[j]| \le |V|$, we can apply these inequalities to (\ref{equ:inequality}) to derive
	\begin{equation} 
	\label{equ:inequality_2}
	\begin{aligned}
	e_{count}^m[i] - e_{count}^m[j] \le \frac{2|E|}{\alpha p} + \frac{\beta}{\alpha} |E| \quad.
	\end{aligned}
	\end{equation}
	
	Since $e_{count}^m[i]$ and $e_{count}^m[j]$ are integers, (\ref{equ:inequality_2}) can be rewritten as 
	\begin{equation} 
	\label{equ:inequality_3}
	\begin{aligned}
	e_{count}^m[i] - e_{count}^m[j] \le \left\lfloor \frac{2|E|}{\alpha p}  + \frac{\beta}{\alpha} |E| \right\rfloor \quad.
	\end{aligned}
	\end{equation}
	
	To complete the proof, we rely on the following technical Lemma~\ref{lemma}.
	\begin{lemma}\label{lemma}
		For any graph $G(V, E)$ partitioned into $p$ subgraphs by Algorithm~\ref{alg:greedy}, $e_{count}^m[i] - e_{count}^m[j] \le 1 + \left\lfloor \frac{2|E|}{\alpha p}  + \frac{\beta}{\alpha} |E| \right\rfloor$ holds for any integer $i$, $j$ and $m$ satisfying $1 \le i \ne j \le p$ and $m \in [1, |E| + 1]$.
	\end{lemma}
	
	\begin{proof}
		For any $i, j \in [1, p], i \ne j$ and $m \in [1, |E|]$ such that $e_{count}^m[i] - e_{count}^m[j] > \left\lfloor \frac{2|E|}{\alpha p}  + \frac{\beta}{\alpha} |E| \right\rfloor$.
		Inequality (\ref{equ:inequality_3}) indicates that the $m^{th}$ edge will not be assigned to subgraph $i$.
		Therefore, 
		\begin{equation} 
		e_{count}^{m+1}[i] - e_{count}^{m+1}[j] \le e_{count}^m[i] - e_{count}^m[j]
		\end{equation}
		when $e_{count}^m[i] - e_{count}^m[j] = 1 + \left\lfloor \frac{2|E|}{\alpha p} + \frac{\beta}{\alpha} |E| \right\rfloor$.
		
		Besides, for any $i \in [1, p]$, $e_{count}^1[i] = 0$.
		
		Thus the lemma can be proved by mathematical induction. For the sake of brevity, we omit the details.
	\end{proof}
	
	Since $e_{count}^{|E| + 1}[i] = |E_i|$ and $\sum_{i = 1}^p |E_i| = |E|$,
	\begin{equation} 
	\label{equ:prove_1}
	\begin{aligned}
	\sum_{j = 1}^p (|E_i| - |E_j|) = & p \times |E_i| - |E|
	\end{aligned}
	\end{equation}
	for any $i$.
	
	By Lemma 1,
	\begin{equation} 
	\label{equ:prove_2}
	\begin{aligned}
	\sum_{j = 1}^p	(|E_i| - |E_j|)& = \sum_{j = 1, j \ne i}^p (|E_i| - |E_j|) \\
	& \le (p -1) \times \left(1 + \left\lfloor \frac{2|E|}{\alpha p} + \frac{\beta}{\alpha} |E| \right\rfloor \right).
	\end{aligned}
	\end{equation}
	
	Substitute (\ref{equ:prove_1}) to (\ref{equ:prove_2}),
	\begin{equation} 
	\label{equ:prove_3}
	\begin{aligned}
	\frac{|E_i|}{|E|/p} \le 1 + \frac{p-1}{|E|} \left(1 + \left\lfloor \frac{2|E|}{\alpha p} + \frac{\beta}{\alpha} |E| \right\rfloor\right)
	\end{aligned}
	\end{equation}
	for any $i \in [1, p]$.
	Thus we have $\frac{\max_{i=1,...,p} |E_{i}|}{|E|/p} \le 1 + \frac{p-1}{|E|} \left(1 + \left\lfloor \frac{2|E|}{\alpha p} + \frac{\beta}{\alpha} |E| \right\rfloor \right) $. 
	
\end{proof}

\begin{theorem}
	Given any graph $G(V, E)$ and any positive integer $p$, partitioning $G$ into $p$ subgraphs by Algorithm~\ref{alg:greedy}, the upper bound of the vertex imbalance factor for the partition result is $1 + \frac{p-1}{\sum_{j = 1}^p|V_j|} \left(1 + \left\lfloor \frac{2|V|}{\beta p}  + \frac{\alpha}{\beta} |V| \right\rfloor \right)$.
\end{theorem}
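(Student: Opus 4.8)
The plan is to mirror the proof of Theorem~1 almost verbatim, swapping the roles of the edge- and vertex-related quantities (so $E\leftrightarrow V$ and $\alpha\leftrightarrow\beta$), while paying attention to the one genuine asymmetry between the two: assigning an edge raises $e_{count}$ by exactly $1$, but raises $v_{count}$ by $0$, $1$, or $2$ depending on how many endpoints are new to the chosen subgraph.

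First I would set up the same notation --- $Eva^m_{(u,v)}(i)$, $e_{count}^m[i]$, $v_{count}^m[i]$, $keep^m[i]$ for the state just before the $m$-th assignment, with $v_{count}^{|E|+1}[i]=|V_i|$ --- and, when the $m$-th edge $(u,v)$ goes to subgraph $i$, start again from $Eva^m_{(u,v)}(i)-Eva^m_{(u,v)}(j)\le 0$. Substituting (\ref{equ:eva}) and isolating $\beta\,\frac{v_{count}^m[i]-v_{count}^m[j]}{|V|/p}$, then bounding the indicator difference by $2$ and $e_{count}^m[j]-e_{count}^m[i]$ by $|E|$, yields $v_{count}^m[i]-v_{count}^m[j]\le \frac{2|V|}{\beta p}+\frac{\alpha}{\beta}|V|$, hence $v_{count}^m[i]-v_{count}^m[j]\le\lfloor\frac{2|V|}{\beta p}+\frac{\alpha}{\beta}|V|\rfloor$ by integrality --- the analog of (\ref{equ:inequality_3}).

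Next I would prove the vertex analog of Lemma~\ref{lemma}: for all $i\ne j$ and $m\in[1,|E|+1]$, $v_{count}^m[i]-v_{count}^m[j]\le 1+\lfloor\frac{2|V|}{\beta p}+\frac{\alpha}{\beta}|V|\rfloor$, by induction on $m$ from $v_{count}^1[i]=0$. If the $m$-th edge is not assigned to $i$, the difference cannot grow; if it is assigned to $i$ and adds at most one new vertex to $i$, then the analog of (\ref{equ:inequality_3}) forces the pre-assignment difference to be at most $\lfloor\frac{2|V|}{\beta p}+\frac{\alpha}{\beta}|V|\rfloor$, so after a jump of at most $1$ it is still within the bound, exactly as in Lemma~\ref{lemma}. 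The only new case --- and the one I expect to be the real work --- is when the $m$-th edge adds two new vertices to $i$ and the difference jumps by $2$: there, ``two new vertices'' means $\mathbb{I}(u\notin keep^m[i])+\mathbb{I}(v\notin keep^m[i])=2$ is maximal, so the indicator difference in the rearranged inequality is $\le 0$ and the governing bound tightens to $v_{count}^m[i]-v_{count}^m[j]\le\frac{\alpha}{\beta}|V|$; this extra pre-jump slack has to absorb the extra $+1$, which comes down to the floor comparison $\lfloor\frac{\alpha}{\beta}|V|\rfloor+2\le 1+\lfloor\frac{2|V|}{\beta p}+\frac{\alpha}{\beta}|V|\rfloor$. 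That comparison is immediate when $\frac{2|V|}{\beta p}\ge 1$ (in particular for the default $\alpha=\beta=1$ and any $p\le 2|V|$) and needs only slightly more careful bookkeeping in the remaining, essentially degenerate, regime.

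Finally, the endgame is identical to Theorem~1 with $E$ replaced by $V$: using $v_{count}^{|E|+1}[i]=|V_i|$ and summing the lemma over $j\ne i$ gives $\sum_{j=1}^p(|V_i|-|V_j|)=p|V_i|-\sum_{j=1}^p|V_j|\le(p-1)(1+\lfloor\frac{2|V|}{\beta p}+\frac{\alpha}{\beta}|V|\rfloor)$, and rearranging yields $\frac{|V_i|}{(\sum_{j=1}^p|V_j|)/p}\le 1+\frac{p-1}{\sum_{j=1}^p|V_j|}(1+\lfloor\frac{2|V|}{\beta p}+\frac{\alpha}{\beta}|V|\rfloor)$ for every $i$; taking the maximum over $i$ gives the stated bound on the vertex imbalance factor. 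Note the denominator here is $\sum_{j=1}^p|V_j|$ rather than $|V|$, precisely because replicated vertices make $\sum_{j=1}^p|V_j|\ge|V|$ in general, unlike the edge case where the two are equal.
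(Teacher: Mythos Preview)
Your proposal is correct and follows exactly the approach the paper intends: the paper does not actually write out a proof of Theorem~2, stating only that it ``adopts the same method used in the proof of Theorem~1 with minor modifications,'' which is precisely what you do. In fact you are more careful than the paper, since you explicitly isolate and treat the one genuine asymmetry---that $v_{count}$ can jump by $2$ rather than $1$---which the paper's ``minor modifications'' phrasing leaves implicit.
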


The proof of Theorem 2 adopts the same method used in the proof of Theorem 1 with minor modifications.
For the sake of simplicity, we do not present it in this paper.

\section{Experiments and Analysis}\label{sec:experiments}
In this section, we compare and analyze the performance of EBV with the Ginger, Degree-Based Hash (DBH), Cartesian Vertex-Cut (CVC), Neighbor Expansion (NE) and METIS in large-scale power-law and non-power-law graphs.
Specifically, we address three major research problems:
\begin{enumerate}[(1)]
	\item What's the influence of the total message size and the message imbalance for the parallel graph computation? 
	\item Are there any graph partition metrics that can reflect the total message size and the message imbalance? If we design a graph partition algorithm that aims to optimize both at the same time, is it better than existing algorithms?
	\item Does our sorting preprocessing benefit EBV algorithm?
\end{enumerate}

\subsection{Experiment Setup and Data}
To answer these questions above, we choose three of the most famous graph algorithms as examples: Single Source Shortest Path (SSSP)~\cite{fredman1987fibonacci}, PageRank (PR)~\cite{page1999pagerank} and Connected Component (CC)~\cite{samet1979connected}.
We also choose four large-scale graphs: USARoad~\cite{USARoad}, LiveJournal~\cite{LiveJournal}, Twitter~\cite{Twitter} and Friendster~\cite{Friend}.
The statistics of these graphs are listed in Table~\ref{tab:dataset}. 
Note that LiveJournal, Twitter and Friendster are power-law graphs, while USARoad is not.
The $\eta$ value (Section~\ref{sec:powerlaw}) quantifies the skewed property of power-law graphs.
Since we want to analyze the influence of $\eta$, we provide the $\eta$ value for USARoad according to the definition, although it's not a power-law graph. 

\begin{table*}[htbp]
	\caption{Statistics of tested graphs}
	\label{tab:dataset}
	\centering
	\begin{tabular}{cccccc}
		Graphs &Type & V & E & Average Degree & $\eta$ \\ \midrule
		USARoad & Undirected & $23,947,347$ & $\quad\,\,\,58,333,344$  & $\,\,\,2.44$ & $6.30$ \\ 
		LiveJournal & Directed & $\,\,\,4,847,571$ & $\quad\,\,\,68,993,773$  & $14.23$ & $2.64$ \\ 
		Friendster  & Undirected & $65,608,366$  & $1,806,067,135$  & $27.53$ & $2.43$ \\
		Twitter & Directed & $41,652,230$  & $1,468,365,182$  & $35.25$ & $1.87$ \\
		\bottomrule
	\end{tabular}

\end{table*}


We test six partition algorithms: the EBV we proposed, Ginger, Degree-Based Hash (DBH), Cartesian Vertex-Cut (CVC), Neighbor Expansion (NE) and METIS. 
We refer to their implementation in the subgraph-centric BSP framework DRONE~\cite{wen2018drone} as EBV, Ginger, DBH, CVC, NE and METIS respectively.
Our experiment platform is a 4-node cluster, with each node consisting 8 Intel Xeon E7-8830 2.13GHz CPU and 1TB memory. 

\subsection{Graph Partition Comparison}\label{sec:partitionercompare}
Here we present the execution time comparison of EBV, Ginger, DBH, CVC, NE and METIS. 
We also compare these results with the state-of-the-art frameworks Galois~\cite{Galois} and Blogel~\cite{yan2014blogel}) for diversity. 
In this comparison, the partition overhead is not included.
Figure~\ref{fig:systemcompare} shows the performance comparison.
We should remark that Blogel employs a multi-source based Graph Voronoi Diagram partitioner to detect the connected components in the partitioning phase, ensuring that each block is ``connected''. In the subsequent CC computing phase, it only merges small connected components into a bigger one without performing the vertex-level computation. For a fair comparison, we add the pre-computing time in the partition phase to the total time of CC for Blogel. Blogel is also excluded from the PR comparison because its PR implementation is not standard and its result is not directly comparable. The interested reader is referred to~\cite{kamvar2003exploiting} for more details.

\begin{figure*}
	\centering
	\begin{minipage}{0.32\linewidth}
		\centerline{\includegraphics[width=1\textwidth]{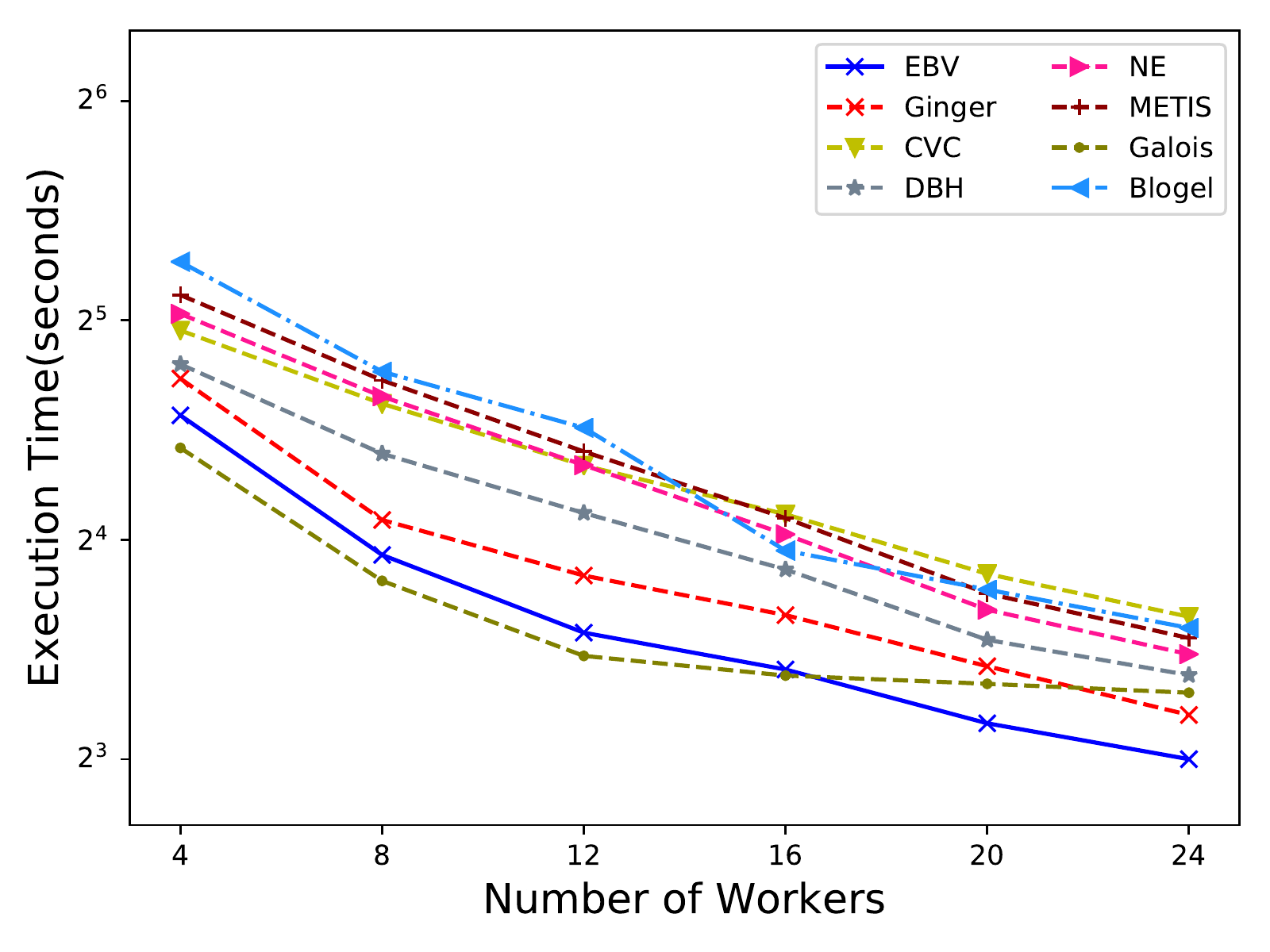}}
		\centerline{CC - LiveJournal}
	\end{minipage}
	\hfill
	\begin{minipage}{0.32\linewidth}
		\centerline{\includegraphics[width=1\textwidth]{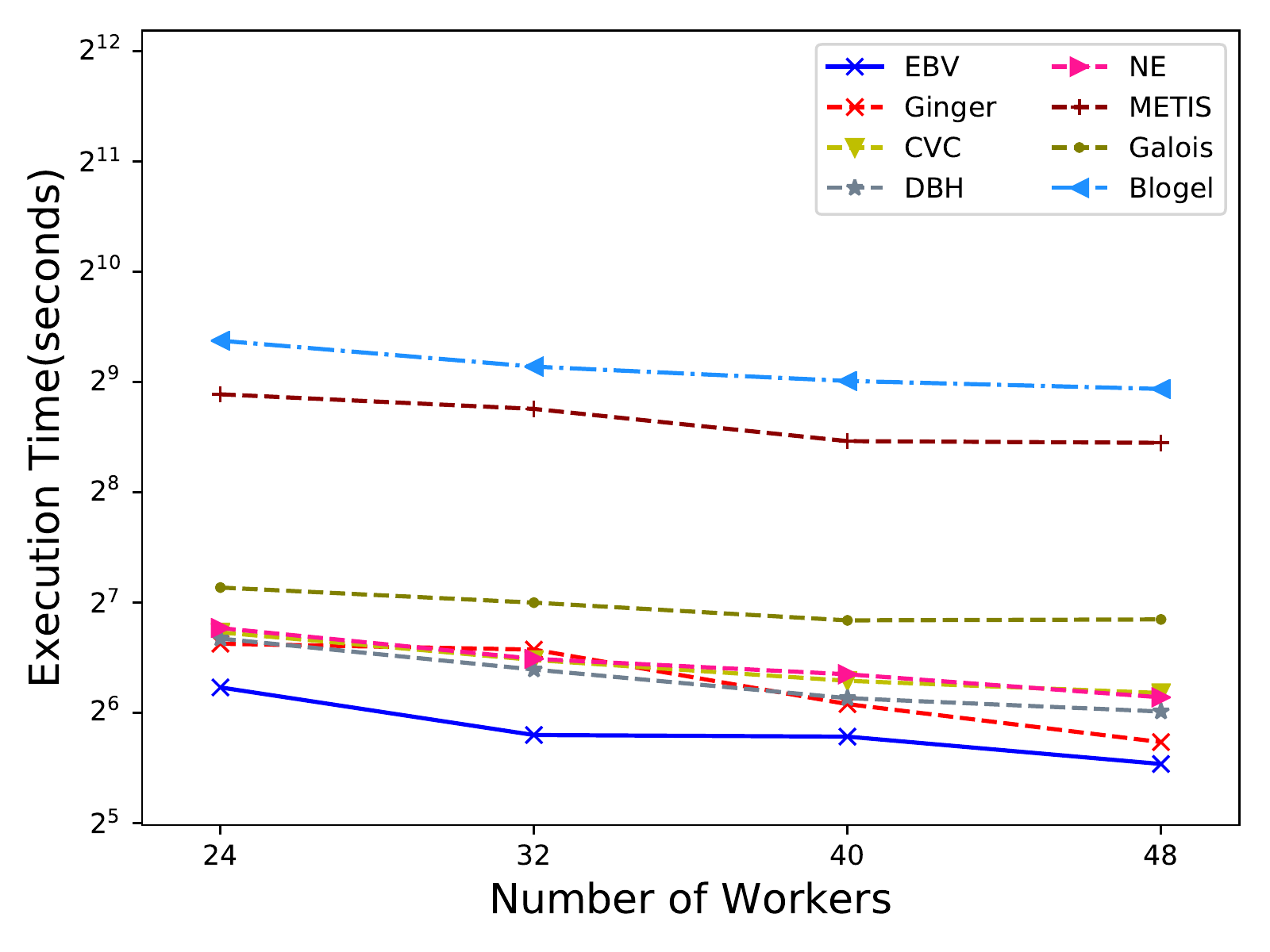}}
		\centerline{CC - Twitter}
	\end{minipage}
	\hfill
	\begin{minipage}{0.32\linewidth}
		\centerline{\includegraphics[width=1\textwidth]{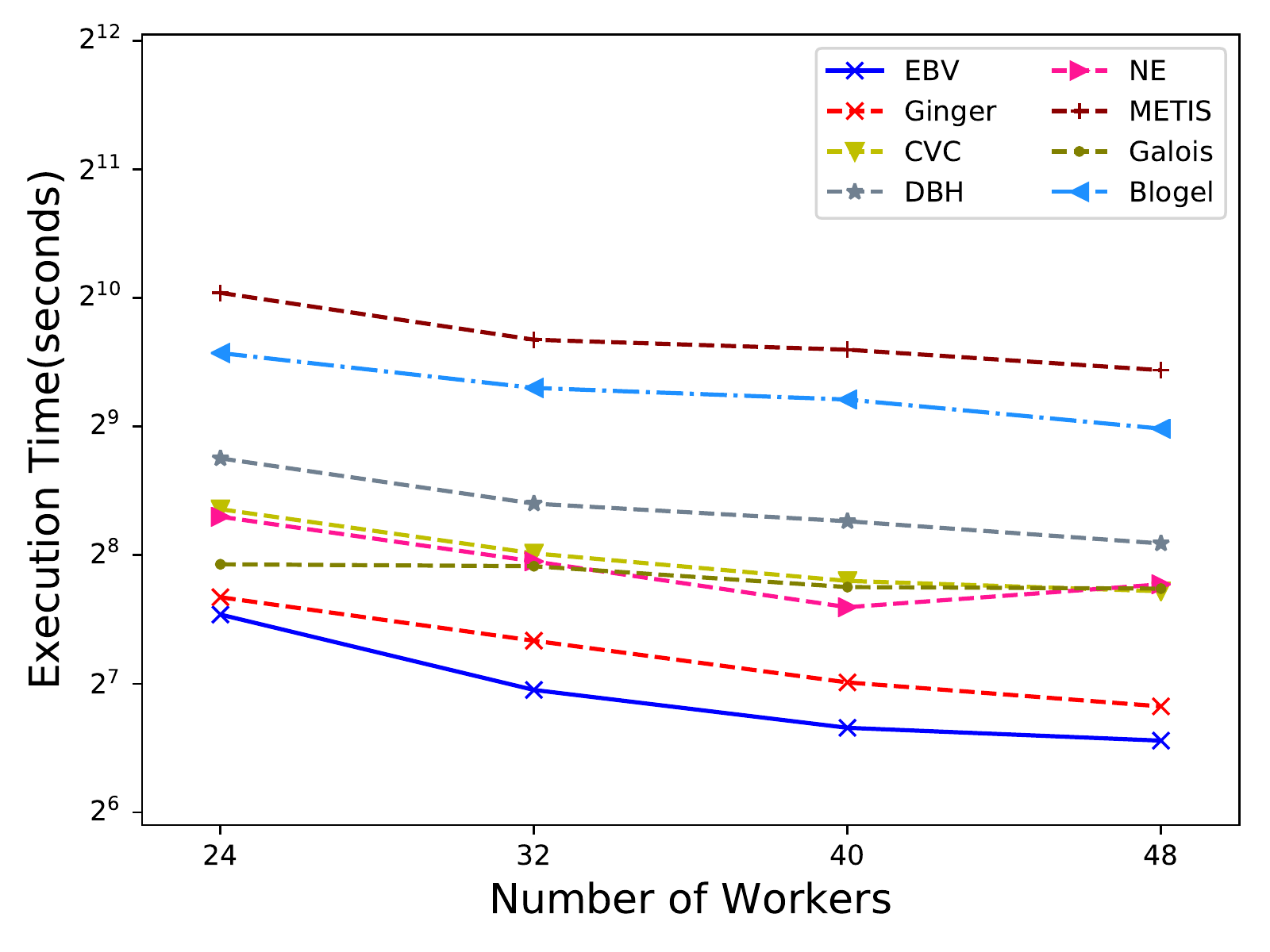}}
		\centerline{CC - Friendster}
	\end{minipage}\\[0.5mm]
	
	\begin{minipage}{0.32\linewidth}
		\centerline{\includegraphics[width=1\textwidth]{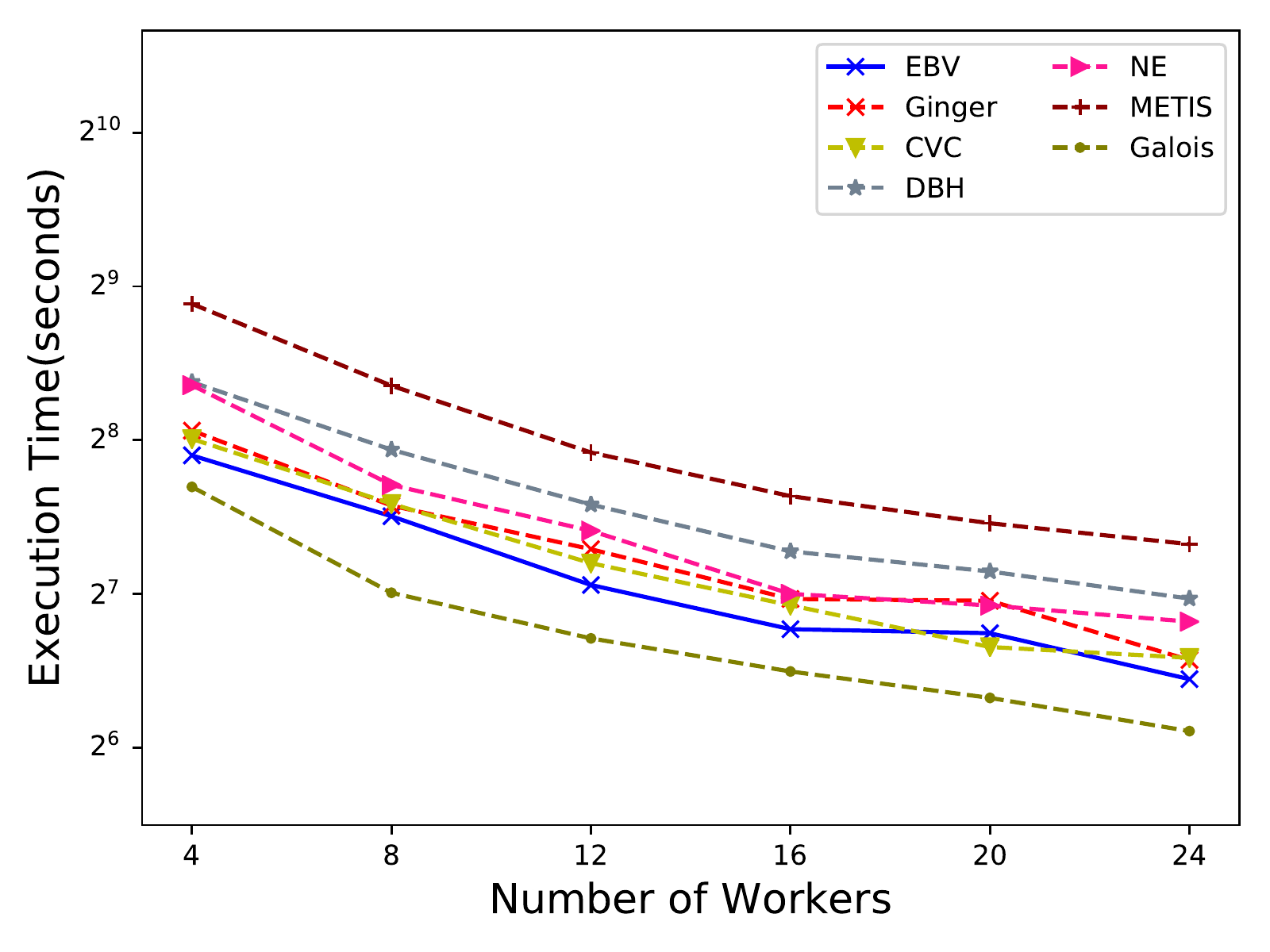}}
		\centerline{PR - LiveJournal}
	\end{minipage}
	\hfill
	\begin{minipage}{0.32\linewidth}
		\centerline{\includegraphics[width=1\textwidth]{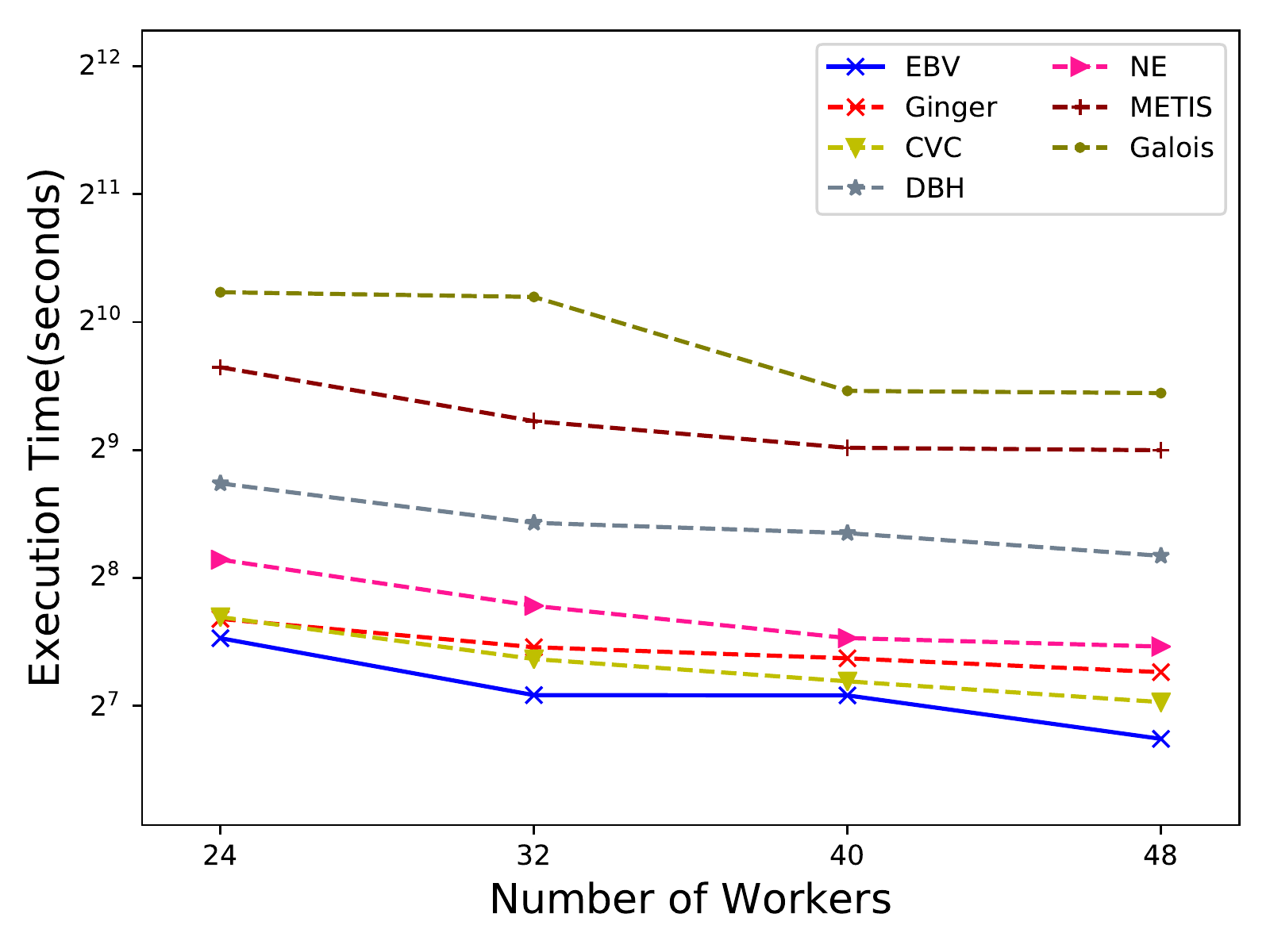}}
		\centerline{PR - Twitter}
	\end{minipage}
	\hfill
	\begin{minipage}{0.32\linewidth}
		\centerline{\includegraphics[width=1\textwidth]{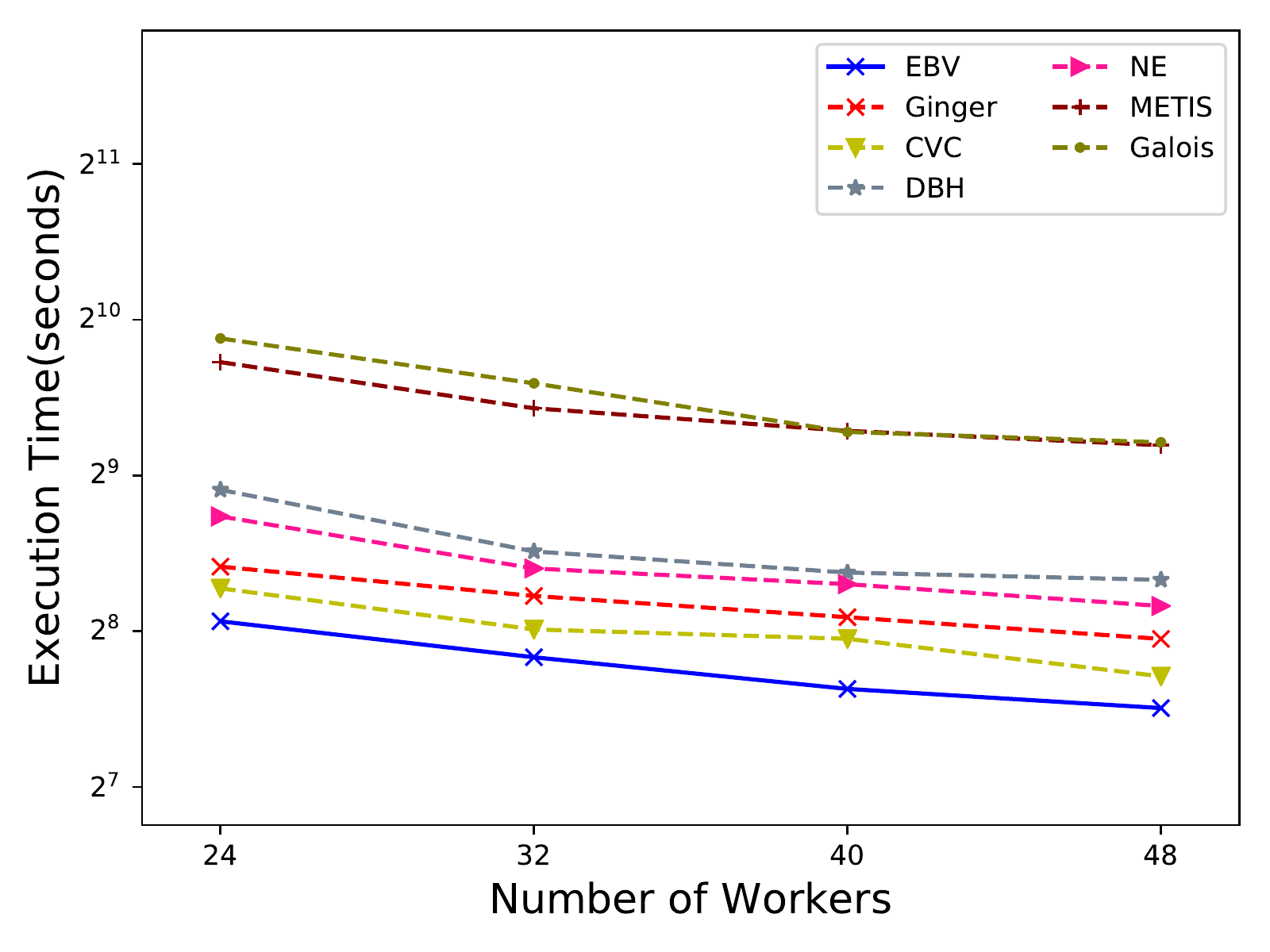}}
		\centerline{PR - Friendster}
	\end{minipage}\\[0.5mm]
	
	\begin{minipage}{0.32\linewidth}
		\centerline{\includegraphics[width=1\textwidth]{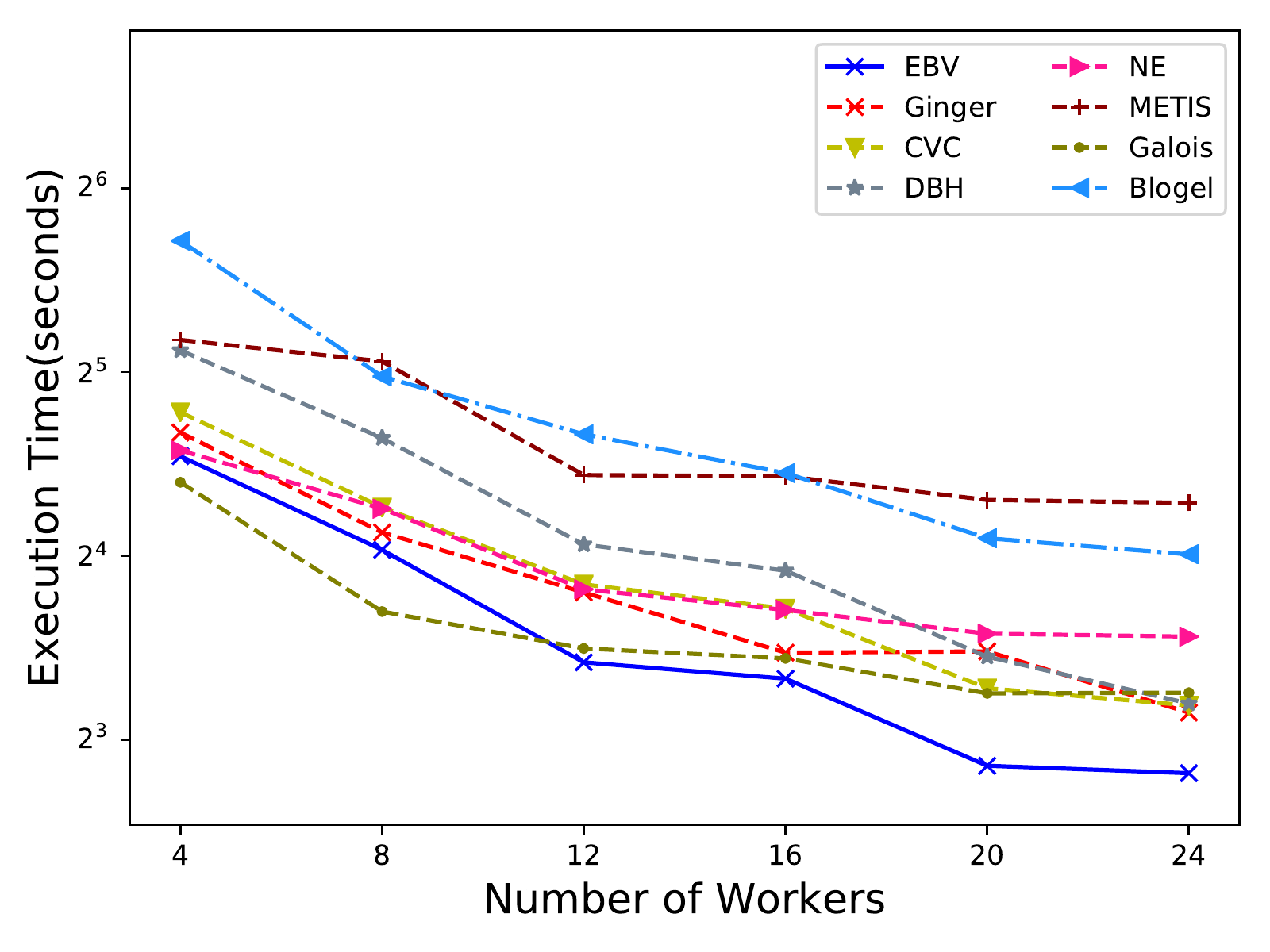}}
		\centerline{SSSP - LiveJournal}
	\end{minipage}
	\hfill
	\begin{minipage}{0.32\linewidth}
		\centerline{\includegraphics[width=1\textwidth]{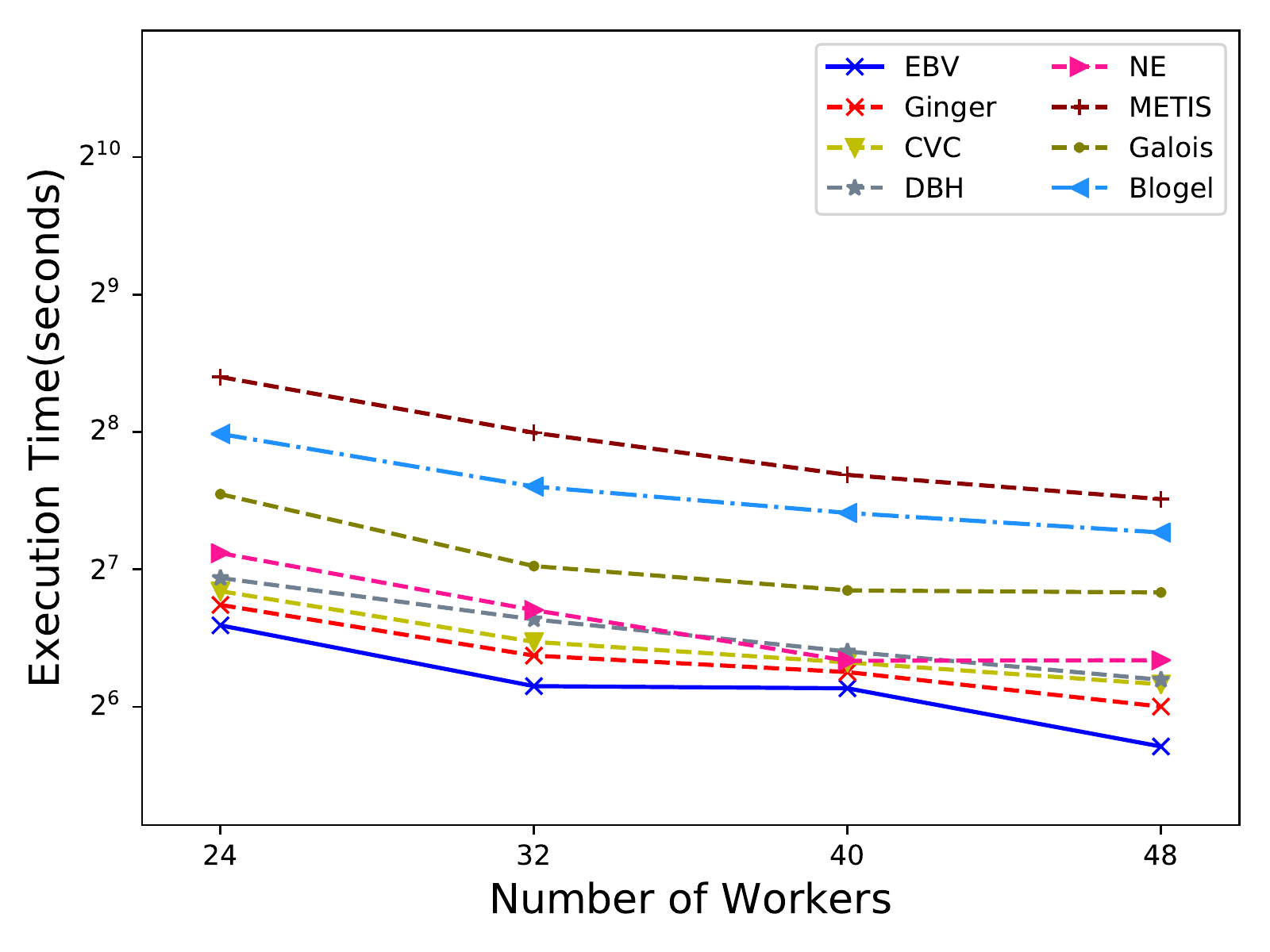}}
		\centerline{SSSP - Twitter}
	\end{minipage}
	\hfill
	\begin{minipage}{0.32\linewidth}
		\centerline{\includegraphics[width=1\textwidth]{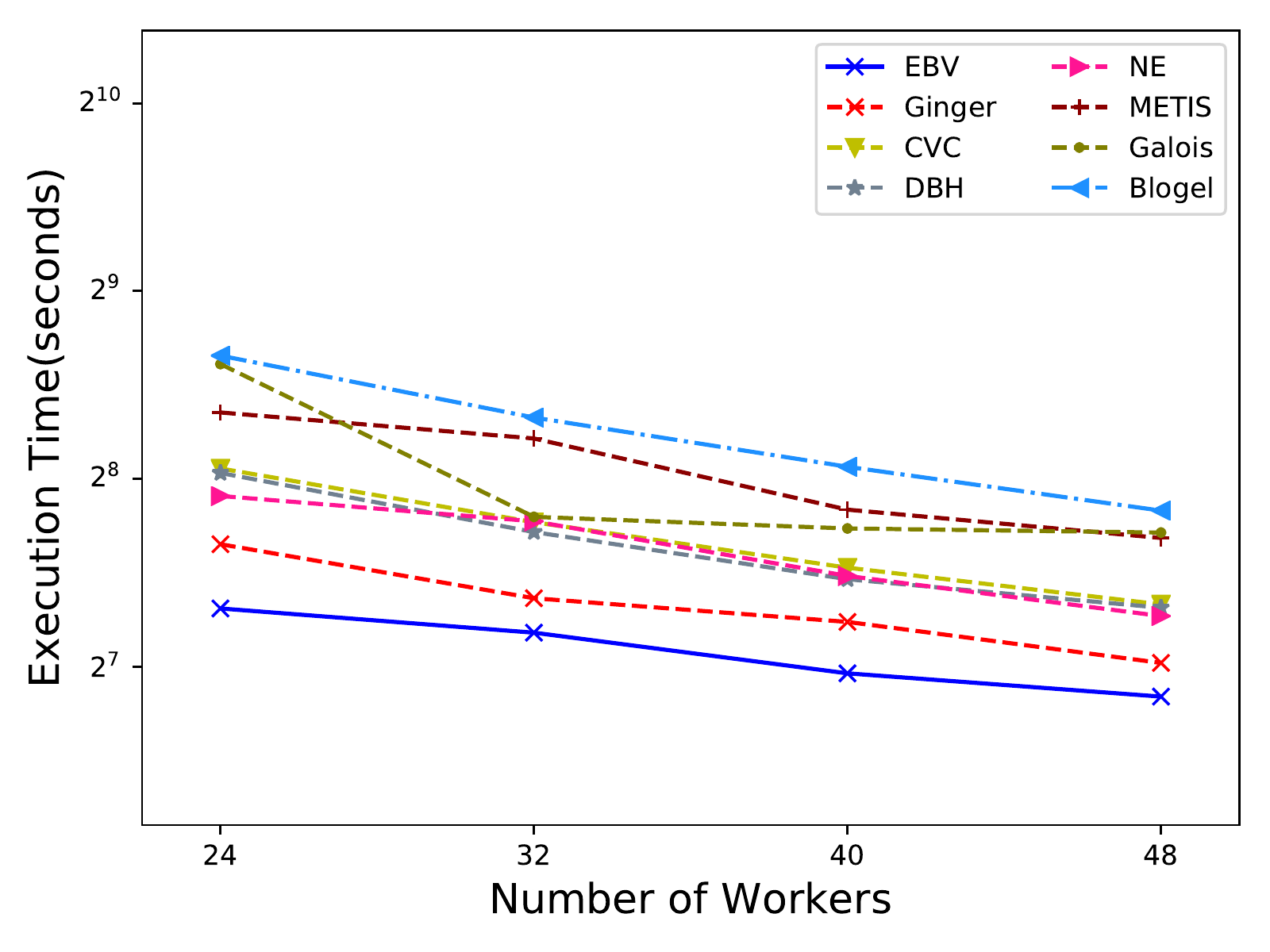}}
		\centerline{SSSP - Friendster}
	\end{minipage}
	
	\caption{
		Cross-system performance comparison of CC, PR and SSSP on various graphs. \protect\\
		Figure~\ref{fig:systemcompare} shows the execution time of CC, PR and SSSP on three power-law graphs with different numbers of workers.
		The X-axis is the number of workers. While the Y-axis is the execution time.
	}
	\label{fig:systemcompare}
\end{figure*}

From Figure~\ref{fig:systemcompare}, we can find that EBV performs best in most cases.
More specifically, compared with Ginger, DBH, CVC, NE and METIS, its execution time is reduced by an average of $16.8\%$, $37.3\%$, $25.4\%$, $31.5\%$ and $63.0\%$ respectively.
Although Galois performs well and even outperforms EBV in the PR-LiveJournal, it is limited for larger graphs.
For Blogel, it takes a longer time in our experiments than Galois.

\begin{figure*}
	\centering
	\begin{minipage}{0.48\linewidth}
		\centerline{\includegraphics[width=1\textwidth]{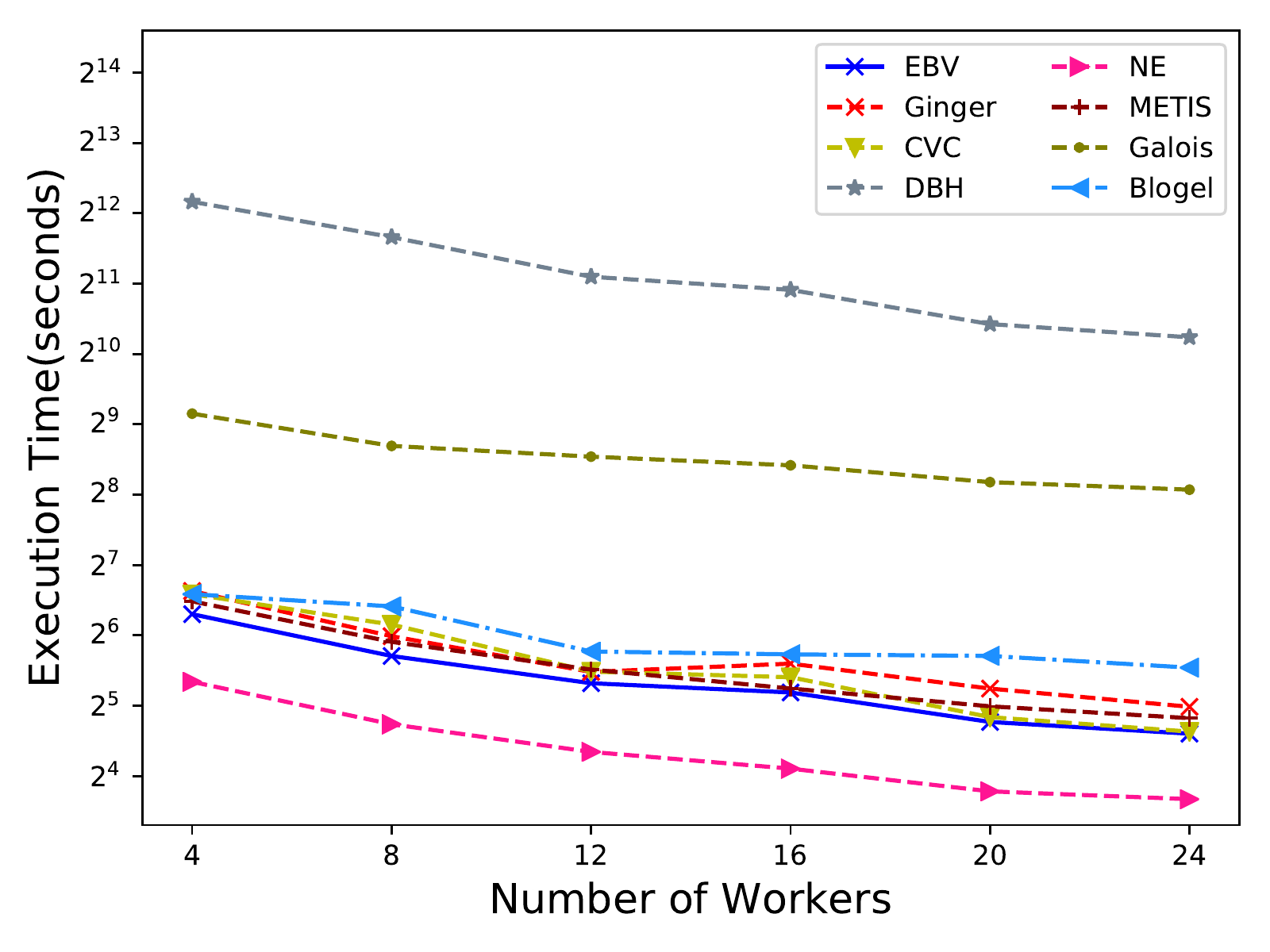}}
		\centerline{CC - USARoad}
	\end{minipage}
	\hfill
	\begin{minipage}{0.48\linewidth}
		\centerline{\includegraphics[width=1\textwidth]{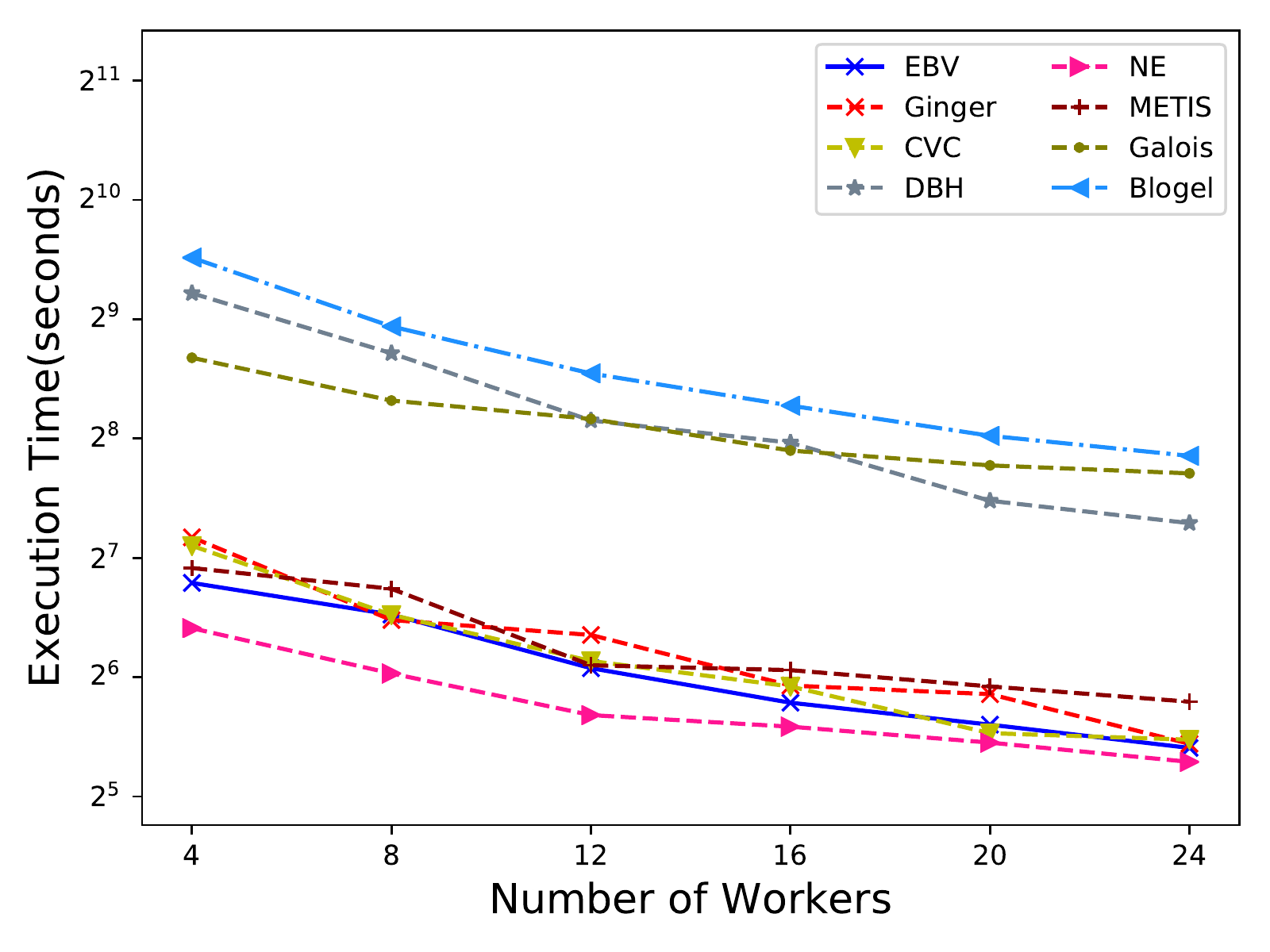}}
		\centerline{SSSP - USARoad}
	\end{minipage}
	\caption{Comparison of CC and SSSP over USARoad.\protect\\
		Figure~\ref{fig:cc_sssp_usa} shows the execution time of CC and SSSP on USARoad with different workers.
		The X-axis is the number of workers. While the Y-axis is the execution time. 
	}
	\label{fig:cc_sssp_usa}
\end{figure*}

We also compare the performance of different frameworks with CC and SSSP on the non-power-law graph (USARoad). 
Figure~\ref{fig:cc_sssp_usa} shows the performance comparison of CC and SSSP.
In this case, the performance of METIS is comparable to EBV, Ginger and CVC, which is different from Figure~\ref{fig:systemcompare}.
Moreover, NE achieves the best performance among all partition algorithms.
Based on these experiments, we can conclude that EBV outperforms other partition algorithms in the power-law graphs, and ranks closely to METIS in the non-power-law graph. 

In order to understand the mechanisms behind different partition algorithms' performance, we study the computation and communication pattern of individual subgraphs. 
Here we choose CC with $4$ workers on LiveJournal as a representative example.
Based on the subgraph-centric BSP programming model and its workflow in Section~\ref{sec:api}, a superstep can be separated into three stages: the computation stage, the communication stage and the synchronization stage. 
For the sake of convenience, we record the computation time and the communication time as $comp_i^k$ and $comm_i^k$ respectively, where $i$ indicates the subgraph ID and $k$ indicates the $k^{th}$ superstep.
The $comp$ and $comm$ are the average computation time and communication time over all workers. Thus they are computed as $\sum\limits_{i=1}^{p}\sum\limits_{k=1}^{s}comp_i^k / p $ and  $\sum\limits_{i=1}^{p}\sum\limits_{k=1}^{s}comm_i^k / p $, while $p$ and $s$ refer to the total number of workers and supersteps.
We remark that $comp + comm$ is not equal to the total execution time, but it is proportional to the later. Moreover, we define $\Delta C^k$ as $\max\limits_i(comp^k_i + comm^k_i) - \min\limits_i(comp^k_i + comm^k_i)$. $\Delta C^k$  represents the longest synchronization (waiting) time in the $k^{th}$ superstep. To demonstrate the accumulated synchronization time and its relative scale to the total execution time, we use $\Delta C = \sum\limits_{k=1}^{s} \Delta C^k$ as the metrics for the workload balance.

\begin{table}
	\centering
	\caption{Breakdown (seconds) of CC with $4$ workers over LiveJournal.}
	\label{tab:balance}
	\begin{tabular}{@{}ccccc@{}}\toprule
		& $comp$  & $comm$ & $\Delta C$ & Execution time  \\ \midrule
		EBV & 20.90 & 1.00 & 3.05 & 23.41 \\
		Ginger & 22.23 & 1.12 & 3.38 & 25.65 \\
		DBH & 23.31 & 1.31 & 7.59 & 27.85 \\ 
		CVC & 24.15 & 1.47 & 8.63 & 30.99 \\
		NE  & 17.81 & 0.52 & 28.02 & 32.66\\
		METIS & 23.28 & 0.49 & 22.70 & 34.66 \\ 
		\bottomrule
	\end{tabular}
\end{table}

\begin{figure}[htbp]
	\centering
	\begin{minipage}{0.48\linewidth}
		\centerline{\includegraphics[width=1\textwidth]{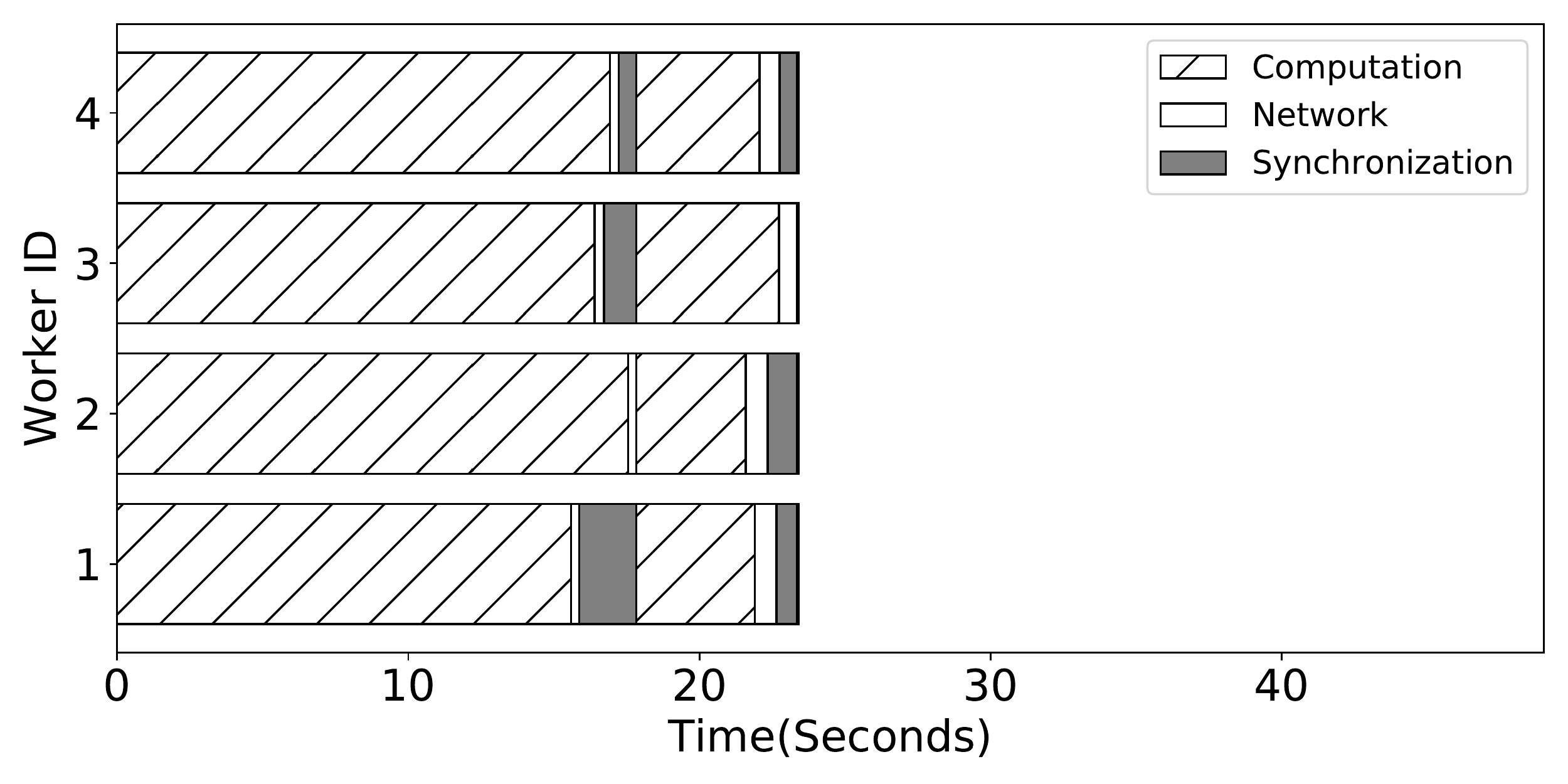}}
		\centerline{EBV}
	\end{minipage}
	\hfill
	\begin{minipage}{0.48\linewidth}
		\centerline{\includegraphics[width=1\textwidth]{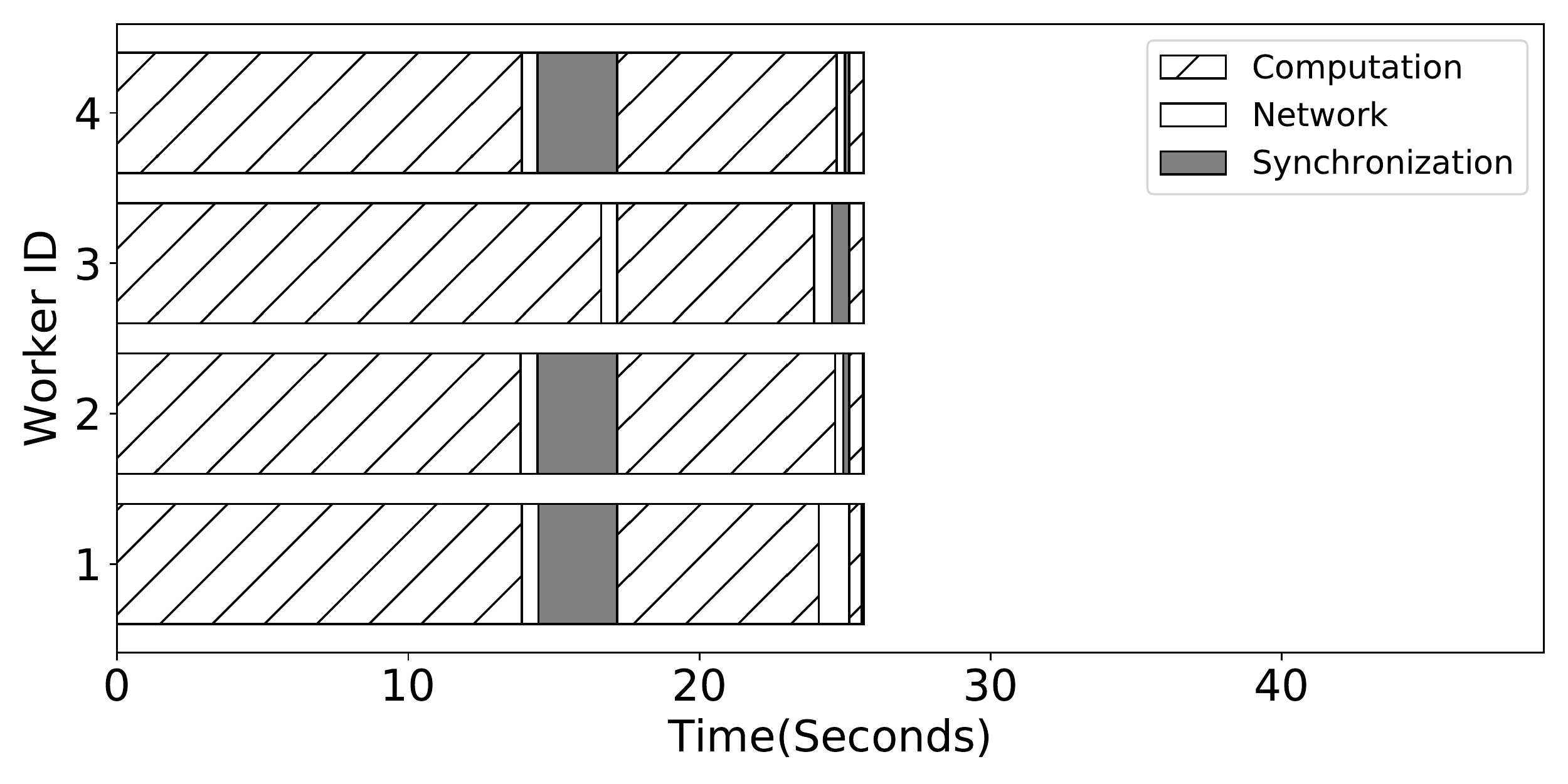}}
		\centerline{Ginger}
	\end{minipage}
	\\
	\begin{minipage}{0.48\linewidth}
		\centerline{\includegraphics[width=1\textwidth]{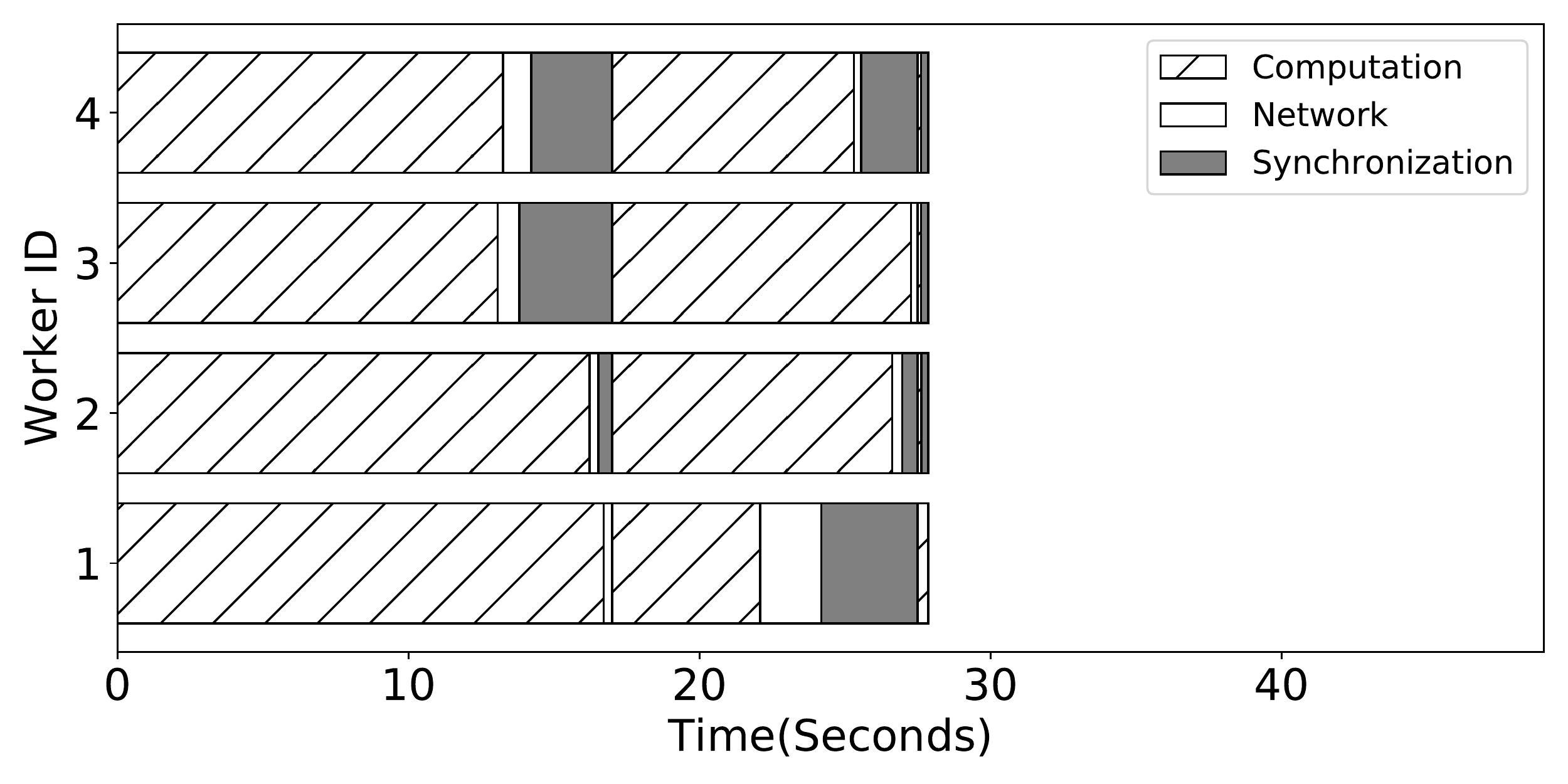}}
		\centerline{DBH}
	\end{minipage} 
    \hfill
	\begin{minipage}{0.48\linewidth}
		\centerline{\includegraphics[width=1\textwidth]{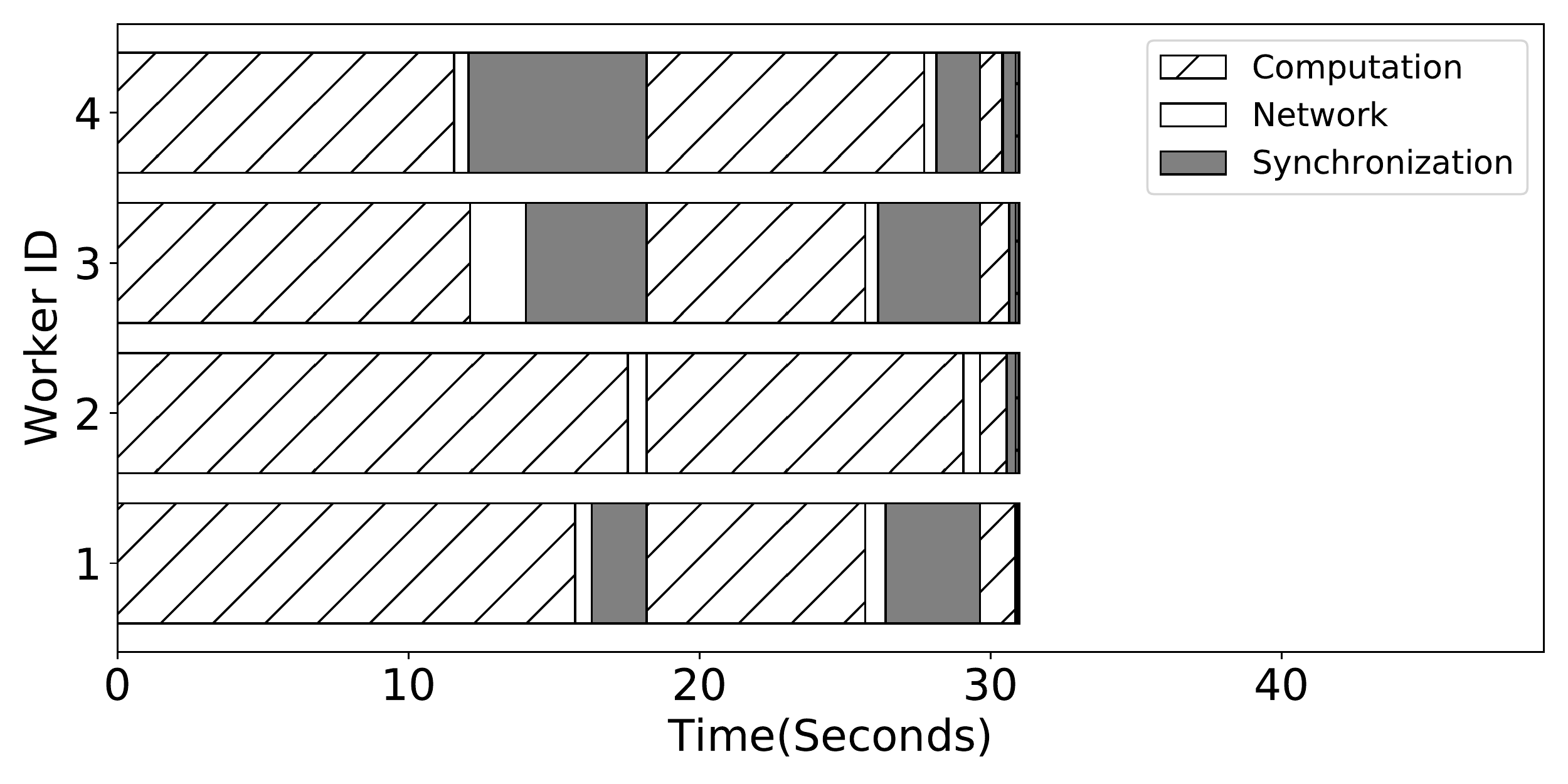}}
		\centerline{CVC}
	\end{minipage}
	\\
	\begin{minipage}{0.48\linewidth}
		\centerline{\includegraphics[width=1\textwidth]{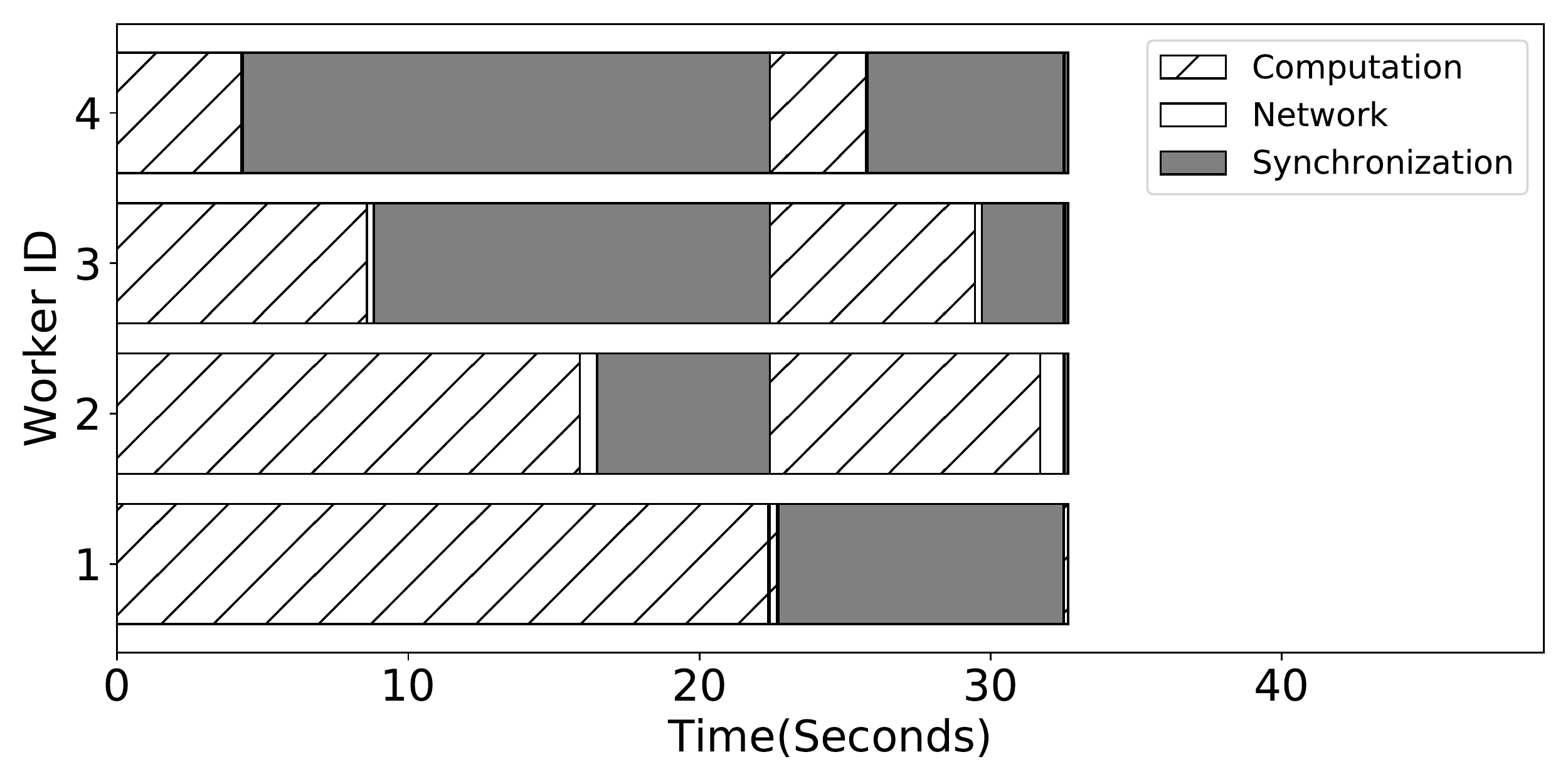}}
		\centerline{NE}
	\end{minipage} 
	\hfill
	\begin{minipage}{0.48\linewidth}
		\centerline{\includegraphics[width=1\textwidth]{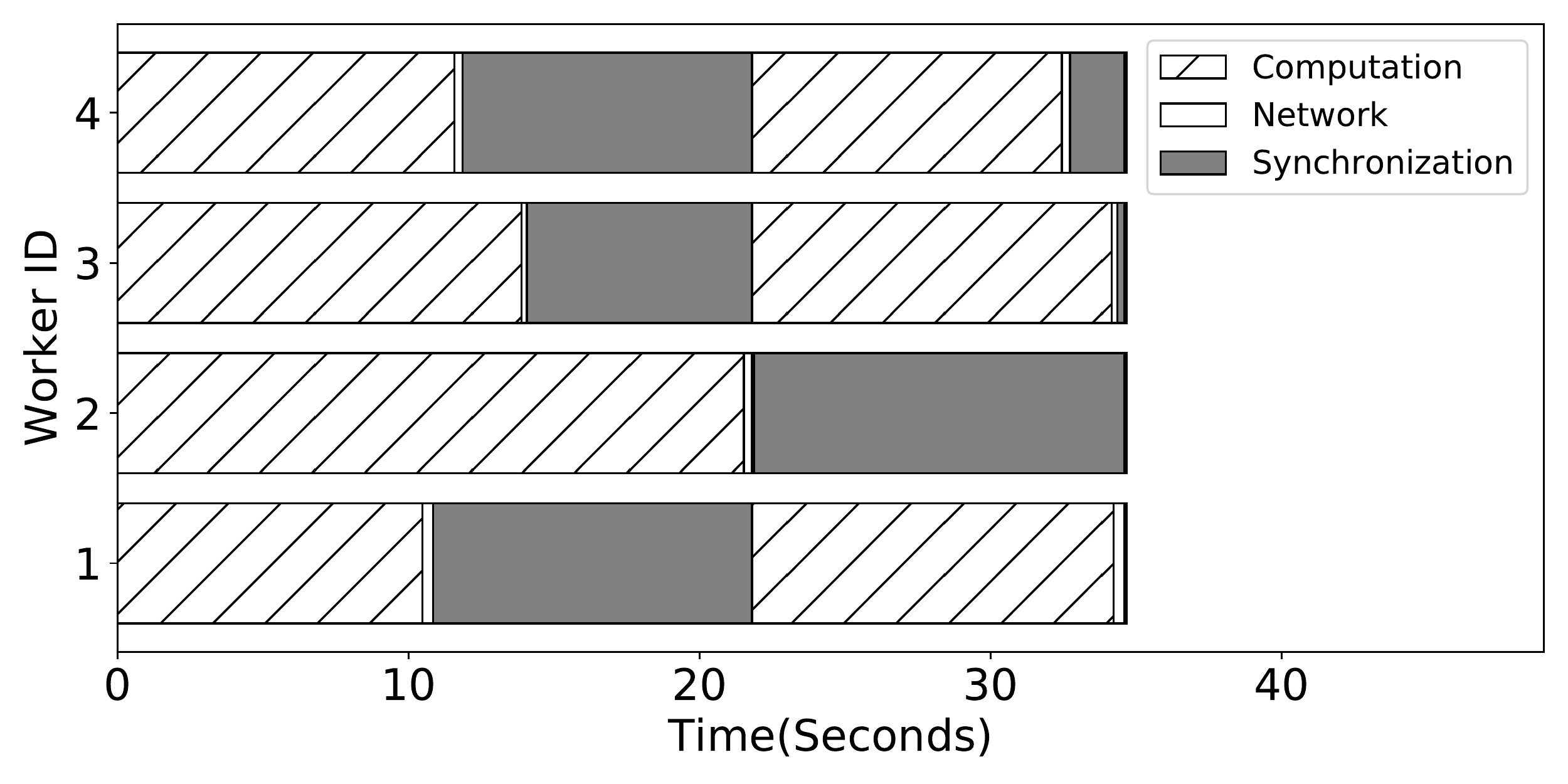}}
		\centerline{METIS}
	\end{minipage} 
	\caption{The breakdown of CC with $4$ workers over LiveJournal}
	\label{fig:balance}
\end{figure}

Table~\ref{tab:balance} demonstrates the detailed breakdown for CC with $4$ subgraphs. For the execution time, EBV is the shortest.
Although the $comp$ of METIS is similar to DBH and CVC and the $comp$ of NE is the shortest, the execution time of NE and METIS is much longer.
This is mainly because the $\Delta C$ of NE and METIS are larger than others, which indicates the strong workload imbalance.
To further demonstrate this phenomenon, we present the runtime comparison in Figure~\ref{fig:balance} for an intuitive perception.
With these experiments above, we can conclude that both the workload balance and the overall computation/communication time play extremely important roles in reducing the execution time.

\subsection{Communication Pattern Analysis}\label{sec:messages}
Here we discuss the communication characteristics of the six graph partition algorithms: EBV, Ginger, DBH, CVC, NE and METIS.
We first focus on the statistical characteristics of the partitioned subgraphs.
Further, we use the number of communication messages of each subgraph as a platform-independent indicator for quantifying the workload balance and the overall computation volume.

We present the characteristics of different graph partition algorithms with three metrics: edge imbalance factor, vertex imbalance factor and replication factor as shown in Section~\ref{sec:pre}.
The edge and vertex imbalance factors imply the computational complexity differences between subgraphs, while the replication factor is proportional to the total number of communication messages between subgraphs.

\begin{table*}[htbp]
	\centering
	\caption{Partitioning metrics comparison of EBV, Ginger, DBH, CVC, NE and METIS. \protect\\ 
		USARoad, LiveJournal, Friendster and Twitter are partitioned into $12$, $12$, $32$ and $32$ subgraphs respectively.
	}
	\label{tab:partitionercompare}
	\centering
	\begin{tabular}{cC{15 pt} C{30 pt}C{30 pt}C{30 pt}C{30 pt}C{30 pt}C{30 pt}  C{17 pt}C{19 pt}C{17 pt}C{17 pt}C{17 pt}C{24 pt}}\toprule
		& & \multicolumn{6}{c}{Edge Imbalance Factor / Vertex Imbalance Factor} & \multicolumn{6}{c}{Replication Factor} \\
		\cmidrule(lr){3-8}  \cmidrule(lr){9-14} 
		Graphs           & $\eta$ &  EBV & Ginger & DBH  & CVC & NE  & METIS  & EBV & Ginger & DBH & CVC & NE & METIS \\ \midrule
		USARoad   & 6.30 & 1.00/1.00 & 1.00/1.00 & 1.00/1.00 & 1.00/1.00 & 1.00/1.05 & 1.10/1.01 & 1.29 & 1.68 & 1.92 & 2.27 & 1.01 & 1.00  \\
    	LiveJournal & 2.64 & 1.00/1.01 & 1.00/1.01 & 1.00/1.00 & 1.00/1.00 & 1.00/2.14 & 2.07/1.03 & 1.80 &2.23 & 3.34 & 3.96 & 1.89 & 1.20  \\
    	Friendster  & 2.43 &1.00/1.00 & 1.04/1.10 & 1.00/1.00 & 1.00/1.00 & 1.00/2.46 & 2.43/1.03 & 4.63 & 5.64 &6.55 & 5.95 & 4.43 & 1.36  \\
		Twitter   & 1.87 &1.00/1.01 & 1.02/1.03 & 1.00/1.00 & 1.03/1.04 & 1.00/3.64 & 6.44/1.05 & 3.59 & 4.51 & 4.99 & 6.75 & 2.42 & 1.56  \\
		\bottomrule
	\end{tabular}
\end{table*}

Table~\ref{tab:partitionercompare} shows the comparison of the edge and vertex imbalance factors and the replication factor among all algorithms.
For the convenience of analysis, we also provide the $\eta$ value and order this table by $\eta$ reversely.
Combing with the $\eta$ value, we find that as $\eta$ decreases (more skewed), the partition results of NE and METIS are more imbalanced.
For the non-power-law graph (with the largest $\eta$), NE and METIS achieve the roughly balanced results.
This phenomenon explains why they perform better in the non-power-law graph as shown in Figure~\ref{fig:cc_sssp_usa}.
Benefit from the evaluation function mentioned in Section~\ref{sec:greedy}, EBV outperforms Ginger, DBH and CVC for the replication factor.
Both the edge and vertex imbalance factors of EBV, Ginger, DBH and CVC are almost $1$.
The good performance of EBV for the edge and vertex imbalance factors and the replication factor conform to its performance in Figure~\ref{fig:systemcompare}. 

We further discuss the metrics shown above by combing with the number of communication messages.
For the sake of simplicity, we focus on the experimental data of CC here.
The number of workers for USARoad, LiveJournal, Friendster and Twitter are $12$, $12$, $32$, $32$, which is the same as Table~\ref{tab:partitionercompare}.
Table~\ref{tab:message_sum} shows the total number of communication messages during the computation for CC.
According to this data, NE and METIS produce a small number of messages and lead by a large margin on the non-power-law graph (USARoad).
The total number of EBV's communication messages is smaller than Ginger, DBH and CVC all the time.
This is in accordance with the replication factor of EBV, Ginger, DBH and CVC in Table~\ref{tab:partitionercompare}. 
More specifically, EBV can reduce $23.7\%$, $23.8\%$, $35.4\%$ and $26.0\%$ messages over USARoad, LiveJournal, Friendster and Twitter than Ginger respectively.
Overall, the data for the total number of messages in the parallel graph computations confirms the tendency of the replication factor.

\begin{table*}[htbp]
	\centering
	\caption{
		The total number of communication messages on the CC algorithm. \protect\\
		This table shows the total number of communication messages between all workers for EBV, Ginger, DBH, CVC, NE and METIS on four graphs.
		The numbers in parentheses are the corresponding replication factors in Table~\ref{tab:partitionercompare}.
	}
	\label{tab:message_sum}
	\begin{tabular}{@{}ccccccc@{}}\toprule
		Graphs  & EBV & Ginger &DBH & CVC & NE & METIS \\ \midrule
		USARoad  & $4.05 \times 10 ^7$ (1.29) & $5.01 \times 10 ^7$ (1.68) & $5.41 \times 10 ^7$ (1.92) & $1.26 \times 10 ^8$ (2.27) & $3.14 \times 10 ^5$ (1.01) & $1.63 \times 10 ^4$ (1.00) \\
		LiveJournal  & $1.43 \times 10 ^7$ (1.80) & $1.77 \times 10 ^7$ (2.23)  & $1.85 \times 10 ^7$ (3.34) & $2.18 \times 10 ^7$ (3.96) & $8.36 \times 10 ^6$ (1.89) & $8.97 \times 10 ^6$ (1.20) \\ 
		Friendster  & $3.45 \times 10 ^8$ (4.63) & $4.67 \times 10 ^8$ (5.64)& $6.97 \times 10 ^8$ (6.55) & $4.51 \times 10 ^8$ (5.95) & $3.31 \times 10 ^8$ (4.43) & $2.64 \times 10 ^8$ (1.36) \\ 
		Twitter   & $1.81 \times 10 ^8$ (3.59) & $ 2.28 \times 10 ^8$ (4.51) & $2.39 \times 10 ^8$ (4.99) & $3.11 \times 10 ^8$ (6.75) & $9.98 \times 10 ^7$ (2.42) & $1.52 \times 10 ^8$ (1.56) \\ 
		\bottomrule
	\end{tabular}
\end{table*}

Although NE and METIS provide better performance in the total communication volume than EBV, it can be observed that EBV outperforms NE and METIS on the power-law graphs. 
The better performance of EBV against Ginger, DBH and CVC can be attributed to the reduction of the communication. 
But we can not make the same claim for NE and METIS. 
Thus, we further analyze the characteristics of the workload balance with communication messages.
For quantifying the imbalance of communication messages for all workers, we introduce the $max/mean$ metrics. 
The $max/mean$ metrics means the ratio of the maximum number of messages sent by each worker and the average number of messages.
Since the overall execution time is denoted by the slowest worker, this metrics is more meaningful than the common variance metrics.
Table~\ref{tab:message_var} presents the $max/mean$ ratio of the number of messages on the CC algorithm.
The corresponding edge and vertex imbalance factors are also presented in parentheses.
Coupling with the $max/mean$ metrics and the imbalance factor, we can see that the $max/mean$ value of EBV, Ginger, DBH and CVC are almost $1$ on all graphs just as the imbalance factors.
Meanwhile, the $max/mean$ values of NE and METIS vary a lot and is extremely affected by the vertex or edge imbalance factors.
In most cases, the $max/mean$ value increases as the edge or vertex imbalance factor gets larger.
It reveals the inefficiency reason for NE and METIS on power-law graphs.

\begin{table*}[htb]
	\centering
	\caption{The $max/mean$ ratio of the number of messages on the CC algorithm \protect\\
		This table shows the $max/mean$ ratio of the number of messages among all workers for EBV, Ginger, DBH, CVC, NE and METIS.
		We also provide the corresponding edge and vertex imbalance factors (Table~\ref{tab:partitionercompare}) in parentheses.
	}
	\label{tab:message_var}
	
	\begin{tabular}{@{}ccccccc@{}}\toprule
		Graphs & EBV & Ginger & DBH & CVC & NE & METIS \\ \midrule
		USARoad & 1.002 (1.00/1.00) & 1.001 (1.00/1.00) & 1.001 (1.00/1.00) &1.019 (1.00/1.00) & 2.670 (1.00/1.05) & 1.75 (1.10/1.01)\\
		LiveJournal & 1.002 (1.00/1.01) & 1.005 (1.00/1.01) & 1.002 (1.00/1.00) &1.008 (1.00/1.00) & 1.697 (1.00/2.14) & 1.93 (2.07/1.03)\\ 
		Friendster & 1.000 (1.00/1.00) & 1.005 (1.04/1.10) & 1.001 (1.00/1.00)  &1.018 (1.00/1.00) & 1.623 (1.00/2.46) & 2.12 (2.43/1.03)\\ 
		Twitter & 1.001 (1.00/1.01) & 1.013 (1.02/1.03) & 1.001 (1.00/1.00)  &1.067 (1.03/1.04) & 2.451 (1.00/3.64) & 3.29 (6.44/1.05)\\ 
		\bottomrule
	\end{tabular}
\end{table*} 

\subsection{Sorting Analysis}\label{sec:sorting}
Finally, we want to investigate the effort of the edge sorting preprocessing. 
We mark our EBV algorithm with or without the sorting preprocessing as EBV-sort and EBV-unsort respectively.
We test EBV-sort and EBV-unsort over three power-law graphs (LiveJournal, Twitter and Friendster) while partitioning them into $4$, $8$, $16$ and $32$ subgraphs.
From Figure~\ref{fig:sort-compare}, we find that on all three power-law graphs, EBV-sort outperforms EBV-unsort for the final replication factor.
Besides, as the number of subgraphs grows, the margin between EBV-sort and EBV-unsort gets larger.
It shows that the sorting preprocessing improves the performance of EBV, especially when the number of subgraphs is large.

Further, we want to study the shapes of these curves.
The replication factor curves of EBV-sort increase sharply at the beginning, and tend to a fixed value later.
According to the definition of replication factor in Section~\ref{sec:pre}, it is proportional to the total number of vertices in all subgraphs.
Therefore, we can conclude that the EBV-sort algorithm assigns edges with low end-vertices' degrees to subgraphs and create many vertices at the beginning.
When processing the edges with high end-vertices' degrees, almost no new vertices are created.
This property implies that the EBV-sort algorithm has great potential in many dense power-law graphs.

\begin{figure*}
	\centering
	\begin{minipage}{0.32\linewidth}
		\centerline{\includegraphics[width=1\textwidth]{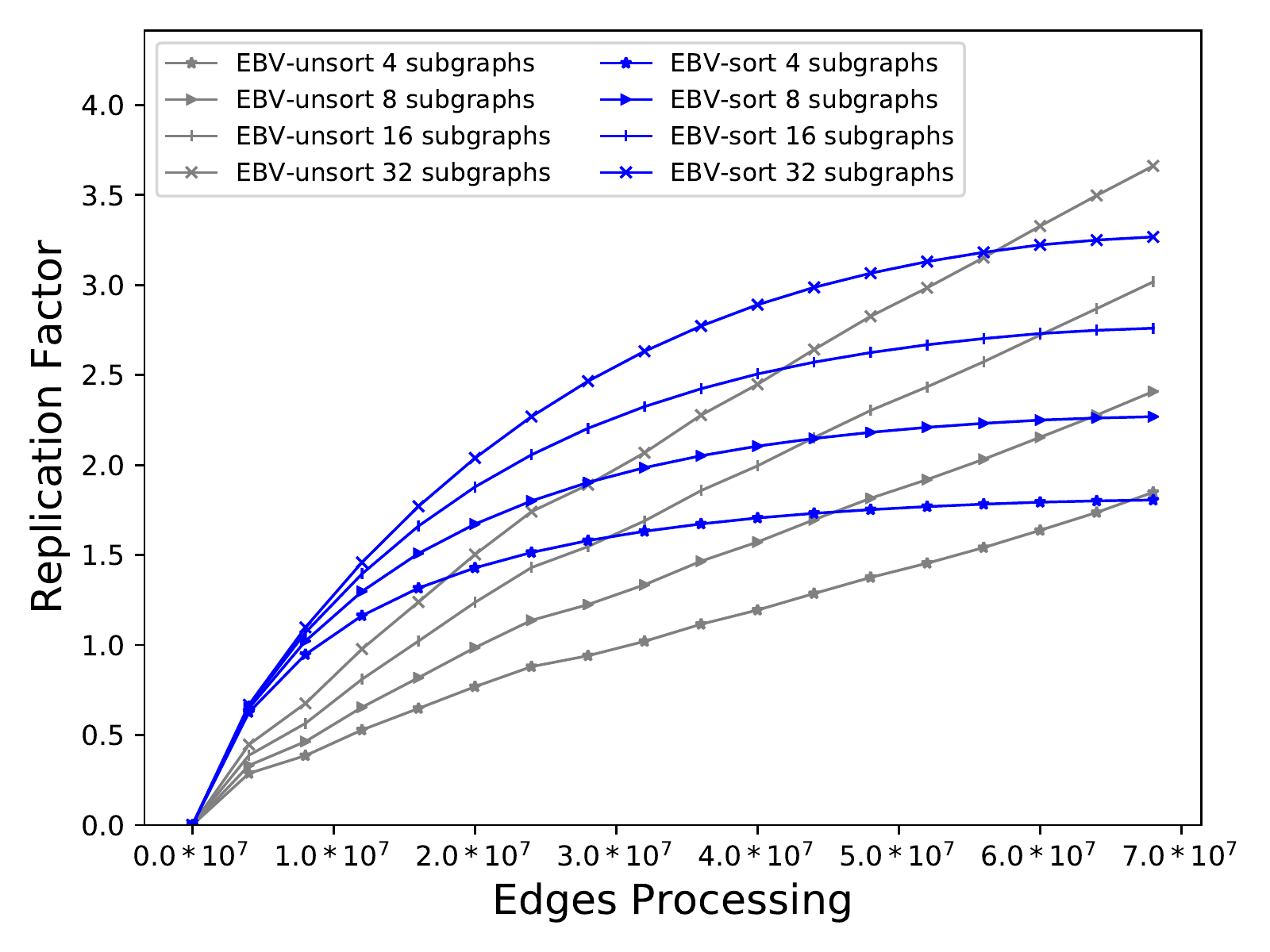}}
		\centerline{LiveJournal}
	\end{minipage}
	\hfill
	\begin{minipage}{0.32\linewidth}
		\centerline{\includegraphics[width=1\textwidth]{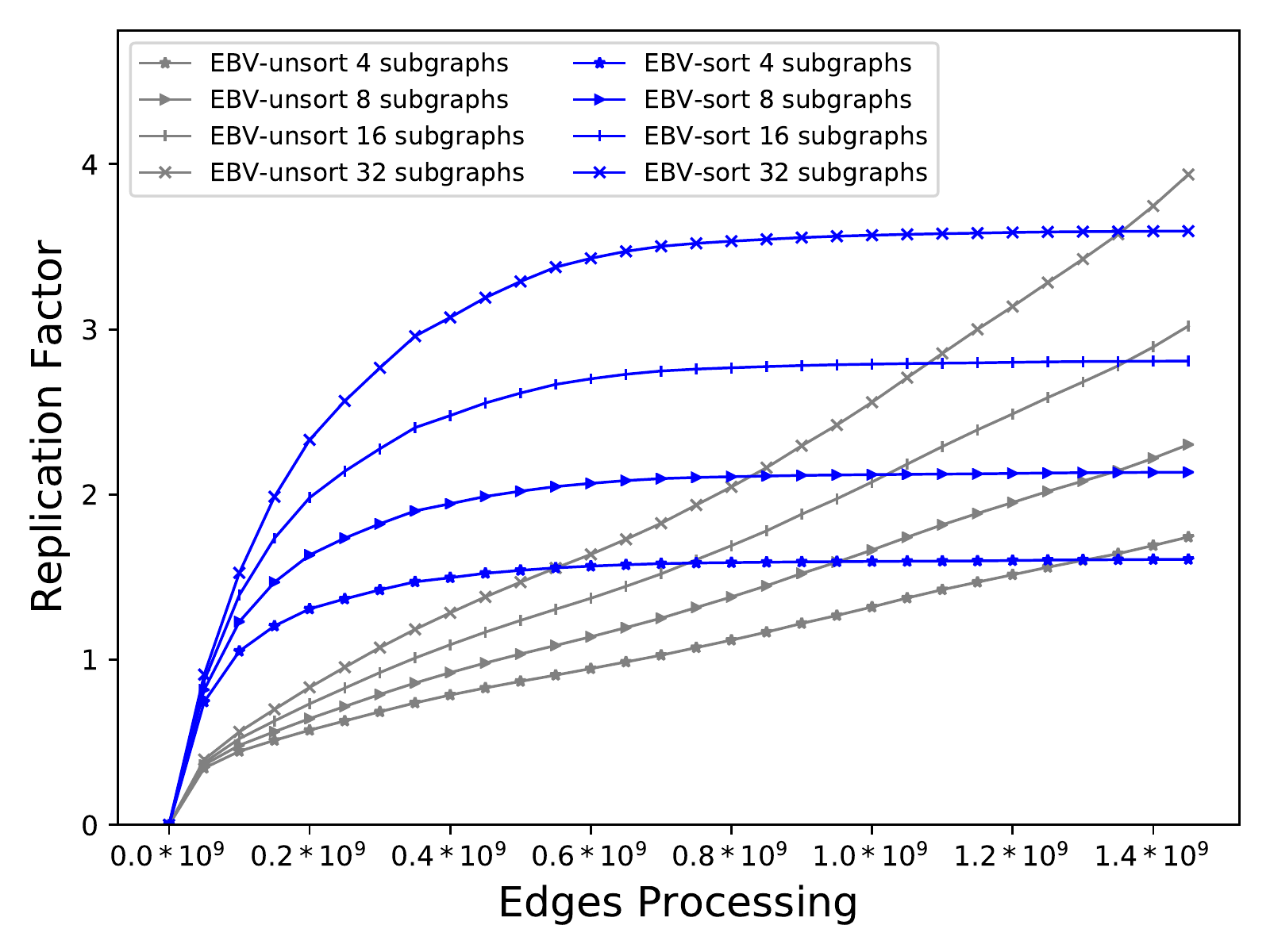}}
		\centerline{Twitter}
	\end{minipage}
	\hfill
	\begin{minipage}{0.32\linewidth}
		\centerline{\includegraphics[width=1\textwidth]{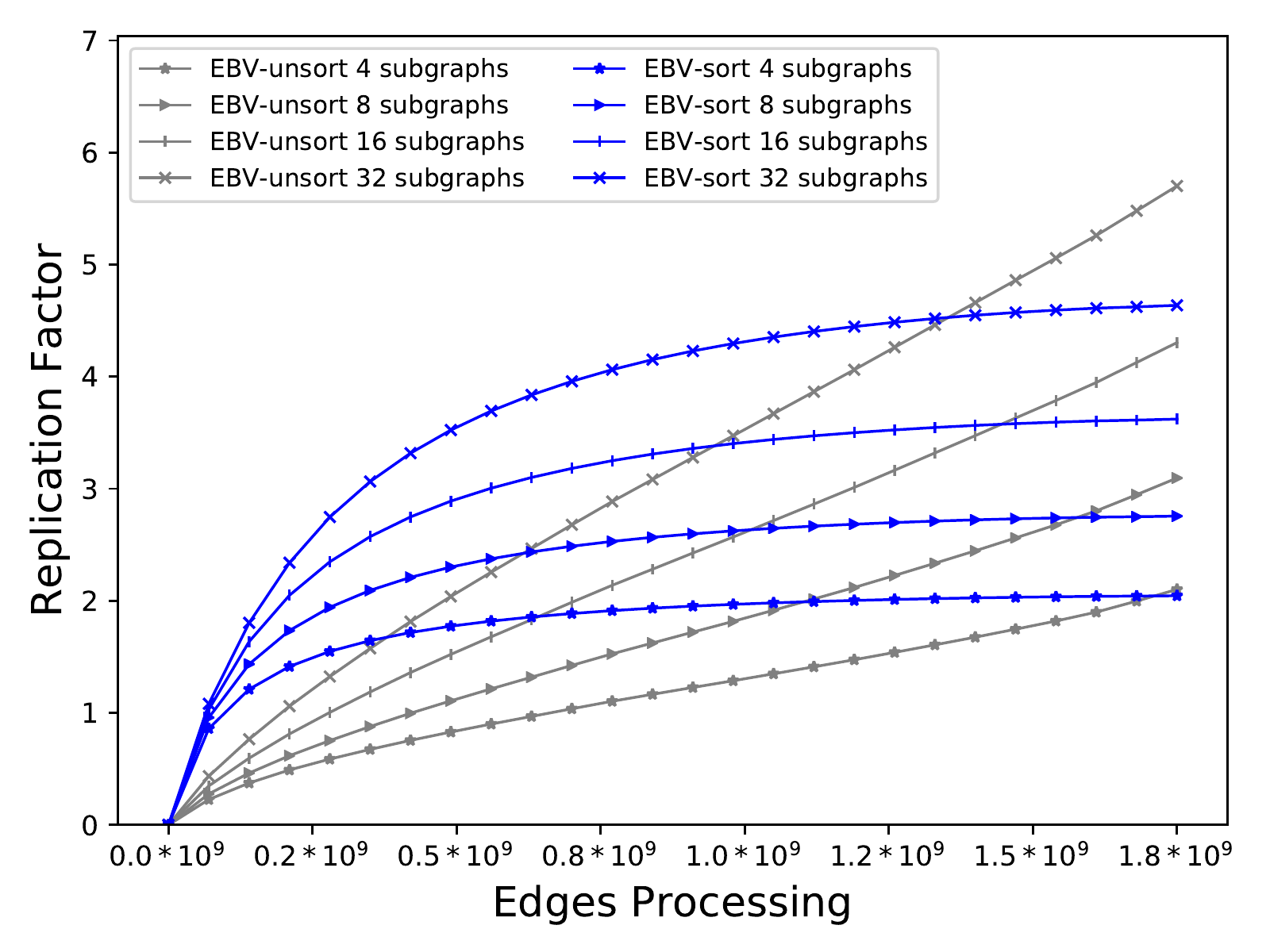}}
		\centerline{Friendster}
	\end{minipage}

	\caption{Replication Factor Growth Curve.\protect\\
		Figure~\ref{fig:sort-compare} shows the growth curve of replication factor over LiveJournal, Twitter and Friendster with $4$, $8$, $16$ and $32$ subgraphs. 
		The X-axis is the number of edges have been assigned and the Y-axis is the replication factor of current results.
	}
	\label{fig:sort-compare}
\end{figure*}

\subsection{Summary}
With the experiments, we can answer these questions we proposed:
\begin{enumerate}[(1)]
	\item We select five representative partition algorithms: Ginger, DBH, CVC, NE and METIS to compare with EBV. 
	For NE and METIS, they only minimize the total message size. For Ginger, DBH and CVC, they care more about the message imbalance and partly ignore the total message size. For EBV, it optimizes both factors.
	Experiments in Figure~\ref{fig:systemcompare} show that EBV outperforms other partition algorithms, which indicates that both the total message size and the message imbalance are essential for partitioning power-law graphs.
	\item Table~\ref{tab:message_sum} and Table~\ref{tab:message_var} reveal that the edge and vertex imbalance factors and replication factor are closely related to the message balance and the total size of messages. The experiment in Figure~\ref{fig:systemcompare} further proves that EBV performs better than Ginger, DBH, CVC, NE, METIS and other representative frameworks in large-scale power-law graphs, which is attributed to the well-designed evaluation function that considers these two factors.
	\item We compare the replication factor of EBV-sort and EBV-unsort in power-law graphs with different numbers of subgraphs. Experiments in Figure~\ref{fig:sort-compare} show that the sorting preprocessing can reduce the replication factor of EBV in power-law graphs significantly.
\end{enumerate}

\subsection{Limitation}
Many subgraph-centric frameworks do not publish their source code such as Giraph++~\cite{tian2013think} and GRAPE~\cite{fan2017parallelizing}. Thus they are not included in our cross-platform comparison. However, it does not influence our conclusion. Since the partition methods they employed are presented in our experiments.

\section{Related Work}\label{sec:relatedwork}
Here we review the graph partition algorithms that can handle large-scale graphs. Some spectral algorithms that utilize the global graph information such as \cite{alpert1999spectral,mcsherry2001spectral} are omitted, because of their heavy partition overhead.

\underline{Local-Based:} 
The most famous local-based graph partition algorithm is METIS~\cite{karypis1997parmetis}. METIS applies several stages to coarsen and uncoarsen the vertices and edges. It also use the K-L algorithm~\cite{kernighan1970efficient} for refining its results during the uncoarsening phase.
Andersen et al.~\cite{andersen2006local} present a local-based partitioning algorithm using a variation of pagerank with a specified starting distribution.
They show that the ordering of the vertices produced by a pagerank vector reveals a cut with small conductance.
A recent local-based graph partition algorithm is NE~\cite{ne}.
It expands the core vertex from the boundary vertex set round by round to partition graphs.
Moreover, they propose a distributed version of NE for trillion-edge graphs~\cite{NEdistributed}.
Shengwei Ji proposes a two-stage heuristic method~\cite{local}, which selects high-degree vertices as core vertices and than expands other vertices.
Those local-based algorithms can produce partition results with small replication factor. 
However, they are limited for the balance of both edges and vertices on power-law graphs.


\underline{Self-Based:} 
The major approach for self-based graph partition algorithms is based on random hash.
For the edge-cut method, it just assigns the vertex by hashing the vertex's ID.
For the vertex-cut method, one straight approach is hashing the edge with its end-vertices' ID into a 1-dimensional value.
Another approach is splitting the adjacency matrix into several blocks (2D partitioning), such as CVC~\cite{boman2013scalable}.
For better handling the skewed degree distribution of power-law graphs, the degree of vertices is widely used.
DBH~\cite{xie2014distributed} cut vertices with higher degrees to obtain better performance.
The hybrid-cut~\cite{chen2019powerlyra} is also proposed with similar ideas.
They assign the edges with the same low-degree target vertex to the same subgraph.
For edges with high-degree target vertices, they assign them according to the source vertices.
Those algorithms are lightweight and can produce roughly balanced results naturally.
However, their partition results need to be further improved for reducing the replication factor.
Besides, they ignore the influence of the processing order for greedy algorithms.

Recently, streaming graph partition algorithms are popular for partitioning large-scale graphs and most of them are self-based algorithms, such as Fennel~\cite{Fennel}, HDRF~\cite{HDRF} and CuSP~\cite{hoang2019cusp}.
They view the input graph as a sequence of edges and process them with just one-pass without extra information.
ADWISE~\cite{mayer2018adwise} is a compromise between streaming and offline, they improve the partitioning quality by caching the current processing edges with the adaptive window size.
However, their performance is limited due to the lack of information on the entire graph.






\section{Conclusion and Future Work}\label{sec:conclusion}

In this paper, we present an efficient and balanced vertex-cut graph partition algorithm for the subgraph-centric model. It is devised to improve the performance of frameworks with subgraph-centric model for large-scale power-law graphs. 
Our results show that graphs partitioned by EBV have the same level of the communication imbalance compared with Ginger, DBH and CVC, while the total communication volume is on the same scale as NE and METIS for power-law graphs. As a result, the EBV algorithm outperforms Ginger, DBH, CVC, NE and METIS.
Experiments indicate that the EBV algorithm achieves outstanding performance for large-scale power-law graphs.
Based on these experiments and analysis, we conclude that the EBV algorithm has very good potential for processing large-scale power-law graphs far beyond the test examples we presented in this paper.

Our algorithm can be further improved in several aspects.
Firstly, EBV is a sequential and offline partition algorithm. We might need to extend it to the distributed and streaming environment to handle larger graphs.
Secondly, a more complex sorting mechanism can be tried for further improving its performance.

In the future, we plan to apply EBV to distributed graph neural networks (GNN) for processing large graphs.
We also plan to explore other potential optimization strategies to form a partition algorithm which could reduce the total communication volume and the communication imbalance further, while it maintains a reasonable partition overhead.

\section*{Acknowledgment}
This work was partially supported by Natural Science Foundation
of China Grants 41930110, 61872272 and 61640221.

\bibliographystyle{abbrv}
\bibliography{journal}

\end{document}